\newcommand{\tuple}[1]{\langle #1 \rangle}
\newcommand{\ignore}[1]{}
\newcommand{\pc}{\textit{pc}}
\newcommand{\atomic}{\mbox{\textit{atom}}}
\newcommand{\zug}[1]{\langle #1 \rangle}
\title{
Sequentializing Parameterized Programs
%\vspace*{-0.2truecm}
\thanks{This research was supported in part by 
MIUR-FARB 2009-2011 
Universit\`a degli Studi di Salerno and
NSF Career Award 0747041.}
}
\newtheorem{theorem}{Theorem}
\newtheorem{lemma}{Lemma}
\newtheorem{definition}{Definition}
\newtheorem{example}{Example}
\newcommand{\qed}{\hfill $\Box$}
\author{
Salvatore {La Torre}
\institute{
Dipartimento di Informatica\\
Universit\`a degli Studi di Salerno, Italy}
\email{slatorre@unisa.it}
\and
P. Madhusudan
\institute{
Department of Computer Science\\
University of Illinois at Urbana-Champaign, USA
}
\email{madhu@cs.illinois.edu}
\and 
Gennaro Parlato
\institute{
Department of Electronic \& Computer Systems\\
University of Southampton, UK}
\email{gennaro@ecs.soton.ac.uk}
}
\begin{document}
\maketitle

%\vspace*{-0.55truecm}
\begin{abstract}
We exhibit assertion-preserving (reachability preserving) transformations from parameterized concurrent
shared-memory programs, under a $k$-round scheduling of processes, to sequential programs.
The salient feature of the sequential program is that it tracks the local variables of only one thread at any point, and uses only $O(k)$ copies of shared variables (it does not use extra counters, not even one counter to keep track of the number of threads). Sequentialization is achieved using the concept of a
linear interface that captures the effect \emph{an unbounded block} of processes have on the shared state in a $k$-round
 schedule. 
 %Linear interfaces in fact serve as \emph{summaries} for parallel processes, and the sequentialization compiles these linear interfaces to procedural summaries in the sequential program.
Our transformation utilizes linear interfaces to sequentialize the program, and to ensure the
 sequential program explores only reachable states and preserves local invariants.
\end{abstract}

%\category{CR-number}{subcategory}{third-level}

%\terms
%term1, term2

%\keywords
%\vspace*{-0.9truecm}
\section{Introduction}

%structure
%  - verification of concurrent programs
%  - general theme of paper
%  - motivate concurrent to sequential
%  - linear interfaces; compositionality
%  - eager vs lazy
%  - main contribution of the paper -- lazy
%  - pred abs; boolean programs; translation shows decidability
%  - fixed-point computation; optimized exploration of seq pgm
%  - linux dev dr benchmarks; experiments
%  - summary of contributions

%Structure:  (NEW)
%----------------------------------------------------------------------------------
%
%1. Introduction
%2. Concurrent programs with recursion; seq programs
%3. Linear interfaces; linear interface theorem
%4. Converting concurrent programs with unbdd threads to seq programs
%   4.1- Eager conversion using linear interfaces
%          and proof(easy)
%   4.2- Lazy conversion including motivation (preserves invariants)
%          and proof(harder)
%   4.3. Unbdd linear interfaces and conversion
%5. Fixed-point formulation of lazy reachability
%6. Experiments
%7. Conclusion

The theme of this paper is to build verification techniques for parameterized concurrent shared-memory
programs: programs with \emph{unboundedly} many threads, many of them running identical code that concurrently
evolve and interact through shared variables.
Parameterized concurrent programs are extremely hard to check for errors. Concurrent shared-memory programs
with a finite number of threads are already hard, and extending existing methods for sequential
programs, like Floyd-Hoare style deductive verification, abstract interpretation,
and model-checking, is challenging.
Parameterized concurrent programs are even harder.

%The Floyd-Hoare methodology for proving safety properties of sequential programs is to establish invariants
%at each point in the program, and prove the various Hoare pre-post triples. This methodology
%has been greatly influential in the verification domain, especially through
%abstract interpretation that establishes invariants over an abstract domain to
%prove simple properties of complex programs. However,
%extensions of the Floyd-Hoare techniques to concurrent programs (see work by %Ashcroft~\cite{DBLP:journals/jcss/Ashcroft75},
%Owicki and Gries~\cite{DBLP:journals/acta/OwickiG76}, and Lamport~\cite{DBLP:journals/tse/Lamport77}) have been less %influential, and are considerably
%more complex to as they require establishing correctness of each sequential component using
%assertions and proving them also to be an invariant with respect to concurrent statements
%in other threads. Abstract-interpretation techniques for concurrent programs are also in their infancy.

A recent proposal in the verification community to handle concurrent program
verification is to \emph{reduce the problem to sequential program verification}.
Of course, this is not possible, in general, unless the sequential program
tracks the entire configuration of the concurrent program, including the local
state of each thread.
However, recent research has shown efficient sequentializations for
concurrent programs with finitely many threads
when restricted to a \emph{fixed number of rounds of schedule} (or a fixed
number of context-switches)~\cite{LalReps,LMPCAV09}.
A round of schedule involves scheduling each process, one at a time in some order,
where each process is allowed to take an \emph{arbitrary} number of steps.

The appeal of sequentialization is that by reducing concurrent
program verification to sequential program verification, we can bring to bear
all the techniques and tools available for the analysis of sequential programs.
Such sequentializations have been used recently to convert concurrent programs under bounded
round scheduling to sequential programs, followed by automatic deductive verification techniques based on
SMT solvers~\cite{ZvonimirCAV09}, in order to find bugs (see also~\cite{QadeerPOPL11}).
The goal of this paper is to find a similar translation for \emph{parameterized} concurrent
programs where the number of threads is unbounded. Again, we are looking for an efficient translation---
to convert a parameterized concurrent
program to a sequential program so that the latter tracks \emph{the local state of at most one thread
at any time, and uses only a bounded number of copies of shared variables}.

%A translation from parameterized concurrent programs to sequential programs is, of course, always possible---
%a sequential interpreter that simulates the global state of the parameterized program will
%suffice. However, such a translation would be incredibly complex, and would need to generate
%unbounded heaps to store the local states of the unboundedly many threads, and will surely
%not be tractable to verify using current sequential analysis tools. Our aim is to convert a concurrent
%program to a sequential program so that the latter tracks \emph{the local state of at most one thread
%at any time, and uses only a bounded number of copies of shared variables}.

%Talk about Lal-Reps, CAV09 paper, and the CAV09 SMT  solver paper.

The motivation for the bounded round restriction is inspired from recent research in
testing that suggests that most concurrency errors manifest themselves within a few context-switches
(executions can be, however, arbitrarily long). The {\sc Chess} tool from Microsoft Research, for example,
tests concurrent programs only under schedules that have a bounded number of context-switches
(or pre-emptions)~\cite{CHESS}. In the setting where there is an unbounded number of threads,
the natural extension of bounded context-switching is bounded-round context-switching, as the latter executes schedules that allow all threads in the system to execute (each thread is context-switched into a bounded number
 of times). Checking a parameterized program to be correct for a few rounds
gives us considerable confidence in its correctness.

The main result of this paper is  that efficient sequentializations of \emph{parameterized} programs
are also feasible, when restricted to bounded-round context-switching schedules. More precisely, we show
that given a parameterized program $P$ with an unbounded number of threads executing under
$k$-round schedules and an error state $e$, there is an effectively constructible (non-deterministic)
sequential program $S$ with a corresponding error state $e'$ that satisfies the following:
(a) the error state $e$ is reachable in $P$ iff the error state $e'$ is reachable in $S$,
(b) a localized state (the valuation of a thread's local variables
and the shared variables) is reachable in $P$ iff it is reachable in $S$--- in other words,
the transformation preserves assertion invariants and explores only reachable states,
(we call such a transformation \emph{lazy}), and (c) $S$ at any point tracks the local state of only one
thread and at most $O(k)$ copies of
shared variables, and furthermore, uses no additional unbounded memory such as counters or queues.

The existence of a transformation of the above kind is theoretically challenging and
interesting. First, when simulating a parameterized program, one would expect that \emph{counters} are
necessary--- for instance, in order to simulate the second round of schedule,
it is natural to expect that the sequential program must remember at least the number of threads it used in the
first round. However, our transformation does not introduce counters.
Second, a lazy transformation is hard as, intuitively, the sequential program needs to
\emph{return} to a thread in order to simulate it, and yet cannot keep the local state during this process.
The work reported in~\cite{LMPCAV09} achieves such a lazy sequentialization for concurrent programs with
finitely many threads using \emph{recomputation} of local states.
Intuitively, the idea is to fix a particular ordering of
threads, and to restart threads which are being context-switched into to recompute their local state.
%the correctness argument is subtle as we need to show that even if the local state isn't what
%was computed in a previous computation, the analysis is sound.
Sequentializing parameterized concurrent programs is significantly harder as we, in addition, do not even
know how many threads were active or how they were ordered in an earlier round,
let alone know the local states of these threads.
%An eager translation, on the other hand, is much easier to build (see the Appendix for such a translation).

%We seek therefore a \emph{lazy sequentialization}, one that explores only the reachable state-space
%of the concurrent program and preserves local invariants. A lazy transformation is much harder to effect as, intuitively, the sequential program needs to \emph{return} to a thread in order to simulate it,
%and yet cannot keep the local state during this process.
%The work reported in~\cite{LMPCAV09} achieves such a lazy sequentialization for concurrent programs with
%finitely many threads using \emph{recomputation} of local states.
%Intuitively, the idea is to fix a particular ordering of
%threads, and to restart threads which are being context-switched into and recompute their local state.
%%the correctness argument is subtle as we need to show that even if the local state isn't what
%%was computed in a previous computation, the analysis is sound.
%Sequentializing parameterized programs
%lazily is significantly harder, as we in addition do not even know how many threads were active,
%or the order of these threads, let alone the local states of these threads.

Our main technical insight to sequentialization is to exploit a concept called
\emph{linear interface} (introduced by us recently~\cite{CAV10} to provide model-checking
algorithms for \emph{Boolean} parameterized systems).
A linear interface summarizes the effect of an \emph{unbounded block of processes}
on the shared variables in a $k$-rounds schedule.
A linear interface is of the form $(\overline{u}, \overline{v})$ where
$\overline{u}$ and $\overline{v}$ are $k$-tuples of valuations of shared variables.
A block of threads have a  linear interface $(\overline{u}, \overline{v})$
if, intuitively, there is an execution which allows context-switching into this block
with $u_i$ and context-switch out at $v_i$, for $i$ growing consecutively from $1$ to $k$,
\emph{while preserving the local state} across the context-switches
(see Figure~\ref{fig:theoLI}).

In classic verification of sequential programs with recursion (in both deductive verification
as well as model-checking), the idea of summarizing procedures using pre-post conditions
(or summaries in model-checking algorithms) is crucial in handling the infinite recursion depth.
In parameterized programs, linear interfaces have a similar role but across a concurrent dimension:
they capture the pre-post condition for \emph{a block of unboundedly many processes}.
%However, because a block of processes
%has a persistent state across two successive rounds of a schedule, 
%differently from a procedure called in a sequential program where a pre-post
%predicate that captures a set of  pairs $(u,v)$ suffices,
%we need to  summarize the effect of a block using
%a sequence of length $k$ of pre-post conditions. 
However, because a block of processes
has a persistent state across two successive rounds of a schedule (unlike a procedure
called in a sequential program), we cannot summarize the effect of a block using a pre-post predicate
that captures a set of pairs of the form $(u,v)$. A linear interface captures a sequence of length $k$
of pre-post conditions, thus capturing in essence the effect of the local states of the block of processes
that is adequate for a $k$-round schedule.

Our sequentialization synthesizes a sequential program that uses a recursive procedure to compute linear interfaces. 
Intuitively, linear interfaces
of the parameterized program correspond to \emph{procedural summaries} in the sequential program.
Our construction hence produces a recursive program even when the parameterized program has no recursion.
However, the translation also works when the parameterized program has recursion, as the program's recursion gets
properly nested within the recursion introduced by the translation.

Our translations work for general programs, with unbounded data-domains.
When applied to parameterized programs over \emph{finite data-domains}, it shows
that a class of \emph{parameterized pushdown systems} (finite automata with
an unbounded number of stacks) working under a bounded round schedule, can be simulated faithfully
by a \emph{single-stack pushdown system} (see Section~4.2). 
%In recent work, we have in fact used a direct
%\emph{fixed-point} algorithm for parameterized pushdown systems, again based on linear interfaces,
%to model-check predicate abstractions of parameterized Linux device drivers~\cite{CAV10}.

The sequentialization provided in this paper paves the way to finding errors in parameterized
concurrent programs with unbounded data domains, up to a bounded number of
rounds of schedule, as it allows us to convert them to sequential programs, in order to apply
the large class of sequential analysis tools available---model-checking, testing,
verification-condition generation based on SMT solvers, and abstraction-based verification.
The recent success in finding errors in concurrent systems code (for a finite number of threads)
using a sequentialization and then SMT-solvers~\cite{ZvonimirCAV09,QadeerPOPL11} lends credence to this optimism.

\paragraph{{\bf Related work.}}
% The idea of performing compositional verification using interfaces to modules has, of course, been
% investigated before. In the context of model-checking, for example, the work reported
% in~\cite{CobleighGiannakopouluPasareanuTACAS03,AlurMadhusudanNamCAV05,NamAlurATVA06} computes
% interfaces for modules using \emph{learning} for compositional verification. However, these
% interfaces are modeled as \emph{finite transition systems}, and will not help in verifying
% unboundedly many threads as the interfaces, when composed, will keep increasing in size. Linear
% interfaces, on the other hand, are of the same size when composed, which leads to effective
% verification techniques for parameterized systems.

 The idea behind bounding context-switches is that most concurrency errors manifest within a few
 switches~\cite{QadeerWu,MusuvathiQadeerPLDI07}.
 The {\sc Chess} tool from Microsoft espouses this philosophy by testing concurrent programs
 by systematically choosing all schedules with a small number of preemptions~\cite{CHESS}.
 Theoretically, context-bounded analysis was motivated for the
 study of concurrent programs with bounded data-domains and recursion, as it yielded a decidable
 reachability problem~\cite{QadeerRehof}, and has been exploited in model-checking~\cite{SES08,LalReps,LaTorreMadhusudanParlatoPLDI09}.
 In recent work~\cite{CAV10}, we have designed model-checking
 algorithms for Boolean abstractions of parameterized programs using the concept of linear interfaces; these work only for bounded data domains and also do not give sequentializations.
 % There have been other theoretical pursuits of studying the effect of context-bounding in the context of systems %with dynamic heaps~\cite{DBLP:conf/cav/BouajjaniFQ07},
% systems communicating via queues~\cite{LMPTACAS08}, and weighted pushdown systems~\cite{LTKRTACAS08}.

 The first sequentialization of concurrent programs was proposed for a finite number of threads
 and two context-switches~\cite{QadeerWu}, followed
 by a general \emph{eager} conversion that worked for arbitrary number of context-switches~\cite{LalReps},
 and a \emph{lazy} conversion proposed by us in \cite{LMPCAV09}.
 %These results already have implicit in them the notion of \emph{thread linear interfaces} that we
% formalize in this paper. Sequentializing parameterized programs lazily is significantly harder,
% and the novelty of this paper is to formalize and exploit linear interfaces for \emph{blocks of processes}
% to provide such a sequentialization.
 Sequentialization has been used recently on concurrent device drivers written in C with
 dynamic heaps, followed by using proof-based verification techniques to find bugs~\cite{ZvonimirCAV09}.
 A sequentialization for delay-bounded schedulers that allows exploration of concurrent programs
 with dynamic thread creation has been discovered recently~\cite{QadeerPOPL11}.
 Also, in recent work, it has been established that any concurrent program with finitely many threads
 that can be reasoned with compositionally, using a rely-guarantee proof, can be sequentialized~\cite{Garg11}.

 A recent paper proposes a solution using Petri net reachability to the reachability problem
 in concurrent programs with \emph{bounded data domains} and  dynamic creation of threads, where a thread is context-switched into only a bounded number
 of times~\cite{ABQ09}. Since dynamic thread creation can model unboundedly many threads, this framework is more powerful (and much more expensive in complexity) than ours, \emph{when restricted to bounded data-domains}.

 There is a rich history of verifying parameterized asynchronously communicating
 concurrent programs, especially motivated by the verification of distributed protocols.
 We do not survey these in detail (see~\cite{CohenNamjoshiCAV08,KPSZ02,KestenMalerMarcusPnueliShahar97,EmersonKahlonCSL04,PnueliXuZuckCAV02,BaslerMazzucchiWahlKroeningCAV09},
  for a sample of this research).

% regular model-checking~\cite{KestenMalerMarcusPnueliShahar97}, abstraction-based techniques that use
% theorem provers for concretization~\cite{LesensSaidiENTCS97}, approaches that prove a certain class of
% parameterized protocols correct by proving them correct for a finite (and computable) number of threads~\cite{EmersonKahlonCSL04};
% %These approaches have been successful in verifying concurrent protocols, but as far as we know,
% %have not been used to verify automatically large programs such as device drivers or predicate abstractions of them.
% exploiting symmetry in replicated concurrent processes~\cite{EmersonSistlaFMSD93}; and counter abstraction~\cite{PnueliXuZuckCAV02,BaslerMazzucchiWahlKroeningCAV09}.
% The work we report here is the first sequentialization of parameterized programs that we are aware of.

\section{Preliminaries}
\label{sec:pre}

\paragraph{\bf Sequential recursive programs.}

Let us fix the syntax of a simple \emph{sequential} programming
language with variables ranging over only the integer and Boolean domains,
with explicit syntax for nondeterminism,  and (recursive) function calls.
%The transformations in this paper require that we can cache the shared variables,
%and copy and compare them, and the notation is far simpler when we do not have
%complex or dynamically allocated structures.
For simplicity of exposition, we do not consider dynamically allocated structures
or domains other than integers; however, all results in this paper can be easily extended
to handle such features.

Sequential programs are described by the following grammar:
\begin{tabbing}
$\langle \textit{seq-pgm}\rangle$ ~~\= ::=~~ \= $\langle \textit{vardec}\rangle {\tt ;}~ \langle\textit{sprocedure}\rangle^*$ \\
$\langle \textit{vardec}\rangle$ \> ::= \> $\langle \textit{type}\rangle~ x \mid
\langle \textit{vardec}\rangle \mathtt{;}~ \langle \textit{vardec}\rangle$\\
$\langle \textit{type}\rangle$ \> ::= \> ${\tt int} \mid {\tt bool}$\\
$\langle\textit{sprocedure}\rangle$ \> ::= \>  ($\langle type \rangle \mid {\tt void}) f(x_1,\ldots,x_h) ~{\tt begin}~
       \langle\textit{vardec}\rangle {\tt ;}~ \langle\textit{seq-stmt}\rangle ~{\tt end}$\\
$\langle\textit{seq-stmt}\rangle$ \> ::= \>
 $\langle\textit{seq-stmt}\rangle \textit{;}~ \langle\textit{seq-stmt}\rangle \mid {\tt skip}
 \mid x := \langle\textit{expr}\rangle \mid {\tt assume}(\langle\textit{b-expr}\rangle) \mid$ \\
\> \> ${\tt call}~f(x_1, \ldots, x_h) \mid
{\tt return} ~x \mid \mathtt{while}~ (\langle\textit{b-expr}\rangle) ~\mathtt{do}~ \langle \textit{seq-stmt}\rangle ~\mathtt{od}$ \\
\> \> ${\tt if}~ (\langle\textit{b-expr}\rangle)~ {\tt then}~
\langle\textit{seq-stmt}\rangle~ {\tt else}~
\langle\textit{seq-stmt}\rangle~ {\tt fi} \mid {\tt assert} \langle\textit{b-expr}\rangle$\\
$\langle\textit{expr}\rangle$ \> ::= \> $x \mid c \mid f(y_1, \ldots, y_h) \mid\langle \textit{b-expr} \rangle$ \\
$\langle\textit{b-expr}\rangle$ \> ::= \> $\textit{T} \mid \emph{F} \mid * \mid x \mid
\neg\langle\textit{b-expr}\rangle \mid \langle\textit{b-expr}\rangle \vee \langle\textit{b-expr}\rangle$ \\
\end{tabbing}

\vspace*{-0.2truecm}
%In the above, $x, x_i, y_i$ are from a set of variable names $\textit{Var}$, $c$ is any integer constant,
%and $f$ denotes a  function. % with $h$ formal parameters and $m$ return values.
Variables are scoped in two ways, either as global variables shared between procedures,
or variables local to a procedure, according to where they are declared.
Functions are all call-by-value. Some functions $f$ may be interpreted to have
existing functionality, such as integer addition or library functions,
in which case their code is not given and we assume they happen atomically.
We assume the program is well-typed according to the type declarations.

%A program has a global variable declaration followed by a list of
%functions. Each function declaration consists of a list of formal parameters, and a
%body containing a declaration of local variables
%followed by a sequence of statements, where statements can be tuple
%assignments, calls to functions (call-by-value) that take in
%multiple parameters and may return values, conditional
%statements, while-loops, or return statements.
%Expressions can be
%integer constants, variables, Boolean expressions or values of function calls.
Note that Boolean expressions can be true, false, or non-deterministically true or false ($*$),
and hence programs are non-deterministic (which will be crucial as we will need to simulate
concurrent programs, which can be non-deterministic). These non-deterministic choices
can be replaced as \emph{inputs} in a real programming language
if we need to verify the sequential program.

%and can be combined using standard Boolean operations. Functions
%that do not return any values are called using the ${\tt call}$
%statement. We also assume that the program type-checks with
%respect to the integer and Boolean types.

%We will assume several obvious restrictions on the above syntax:
%lobal variables and local variables are assumed to be disjoint;
%formal parameters are local variables; the body of a function
%$f$ has only variables that are either globally declared,
%locally declared, or a formal parameter; a return statement in the
%body of $f$ is of the form {\tt return $x_1,\ldots,x_m$} or simply
%{\tt return} (in the latter case arbitrary values will be returned).
%We also assume that the return statements of the first kind conform to
%the number and the type of the values returned by a function.

Let us assume that there is a function {\tt main}, which is the
function where the program starts, and that there are no calls to
this function in the code of $P$.
The semantics of a sequential program $P$ is the obvious one.

The {\tt assert} statements form the specification for the program, and express
invariants that can involve all variables in scope.
Note that \emph{reachability} of a particular statement can be encoded using
an {\tt assert F} at that statement.

\paragraph{\bf Parameterized programs with a fixed number of shared variables.}
We are interested in concurrent programs composed of several concurrent processes,
each executing on possibly unboundedly many threads (\emph{parameterized programs}).
All threads run in parallel and share a fixed number of variables.

A \emph{concurrent process} is essentially a sequential program with the possibility of declaring
sets of statements to be executed \emph{atomically}, and is given by the following grammar
(defined as an extension on the syntax for sequential programs):
\begin{tabbing}
~~$\langle\textit{process}\rangle$ ~~~~~~~~\= ::= ~~\=  ${\tt process} ~P~ {\tt begin}~ \langle\textit{vardec}\rangle\textit{;}~
\langle\textit{cprocedure}\rangle^*~  {\tt end}$ \\
~~$\langle\textit{cprocedure}\rangle$ \> ::= \>
$(\langle type \rangle \mid {\tt void}) f(x_1,\ldots,x_h) ~{\tt begin}~\langle\textit{vardec}\rangle {\tt ;} \langle\textit{conc-stmt}\rangle~ {\tt end}$\\
~~$\langle\textit{conc-stmt}\rangle$ \> ::= \>  $\langle\textit{conc-stmt}\rangle {\tt ;} \langle\textit{conc-stmt}\rangle
 \mid \langle\textit{seq-stmt}\rangle \mid {\tt atomic~~ begin}~ \langle\textit{seq-stmt}\rangle~
{\tt end}$
 \end{tabbing}

%A \emph{parameterized program} is a
%concurrent program where each a thread can be replicated arbitrarily many times.
%This is the case for many device drivers and distributed protocols.
\pagebreak
\noindent The syntax for parameterized programs is obtained by adding the following rules:
\begin{tabbing}
~$\langle\textit{param-pgm}\rangle$ ~~~~~\= ::=~ \=
$\langle\textit{vardec}\rangle \langle\textit{init}\rangle \langle\textit{process}\rangle^*$\\
%$\langle$\emph{nonpar-thrd}$\rangle$ $\langle$\emph{par-thrd}$\rangle$\\
~$\langle\textit{init}\rangle$ \> ::= \> $\langle\textit{seq-stmt}\rangle$\\
%~$\langle$\emph{nonpar-thrd}$\rangle$ \> ::= \>
%{\tt nonparameterized:}  $\langle$\emph{thrd-list}$\rangle$\\
%~$\langle$\emph{param-thrd}$\rangle$ \> ::= \>
%{\tt parameterized:}  $\langle$\emph{thrd-list}$\rangle$\\
%~$\langle$\emph{thrd-list}$\rangle$ \> ::= \> $\langle$\emph{thread}$\rangle$ $|$
%$\langle$\emph{thrd-list}$\rangle$  $\langle$\emph{thrd-list}$\rangle$\\
 \end{tabbing}

\vspace*{-0.2truecm}
Variables in a parameterized program can be scoped locally, globally (i.e. to
a process at a particular thread) or shared (shared amongst all processes in
all threads, when declared before {\tt init}). The statements and assertions in a parameterized
program can refer to all variables in scope.

Each parameterized program has a sequential block of statements, {\tt init}, where
the shared variables are initialized. The parameterized program is initialized
with an arbitrary finite number of threads, each thread running a copy of one of
the processes.
Dynamic creation of threads is not allowed. However, dynamic creation can be modeled
(at the cost of a context-switch per created thread) 
by having threads created in a ``dormant'' state, which get active when they
get a message from the parent thread to get created.
%\footnote{An important
%note: when modeling creation of threads this way, a creation may need
%a context-switch to activate the created thread. True creation of threads
%without paying a context-switch (like in~\cite{ABQ09}) cannot be modeled in our framework,
%and in fact we believe a sequentialization without introducing a unbounded variable is unlikely---
%see the section on Related Work.}

An \emph{execution} of a parameterized program is obtained by interleaving
the behaviors of the threads which are involved in it.

Formally, let ${\cal P}=(S,{\tt init},\{P_i\}_{i=1}^{n})$ be a
\emph{parameterized program} where $S$ is the set of shared variables and
$P_i$ is a process for $i=1,\ldots,n$.
We assume that each statement of the program has a unique \emph{program
counter} labeling it.
A \emph{thread} $T$ of $\cal P$ is
a copy (instance) of some $P_i$ for $i=1,\ldots,n$.
At any point, only one thread is \emph{active}.
For any $m>0$, a \emph{state} of $\cal P$ is denoted by a tuple
$(map, i, s,\sigma_1,\ldots,\sigma_m)$ where:
(1) $map: [1,m] \rightarrow P$ is a mapping from
    threads $T_1, \ldots T_m$ to processes,
%(2) $pc: [1,m] \rightarrow PC$ that maps which program counter each thread is at,
(2) the thread which is currently active is $T_i$, where $i \in [1,m]$
(3) $s$ is a valuation of the shared variables, and
(4) $\sigma_j$ (for each $j \in [1,m]$) is a local state of $T_j$.
Note that each $\sigma_j$ is a \emph{local state} of a process, and
is composed of a valuation of the program counter, local, and global variables of the process,
and a \emph{call-stack} of local variable valuations and program counters
to model procedure calls.

At any state $(map, i, s,\sigma_1,\ldots,\sigma_m)$, the valuation of the shared variables $s$
is referred to as the \emph{shared state}.
%A \emph{local state} is a valuation of local and global variables, local program counter, and the call-stack (which %are local to a process), i.e. $\sigma_i$.
A \emph{localized state} is the \emph{view} of the state by the current process,
i.e. it is $(\widehat{\sigma_i}, s)$,
where $\widehat{\sigma_i}$ is the component of $\sigma_i$ that defines the valuation of local
and global variables, and the local pc (but not the call-stack),
and $s$ is the valuation of the shared variables in scope.
Note that assertions express properties of the localized state only.
Also, note that when a thread is not scheduled, the local state of its process
does not change.

The interleaved semantics of parameterized programs
is given in the obvious way.
%using executions, which are evolutions of states, in the obvious way.
We start with an arbitrary state, and execute the statements of {\tt init}
to prepare the initial shared state of the program, after which the threads become active.
Given a state $(map, i,\nu,\sigma_1,\ldots,\sigma_m)$, it can either fire a \emph{transition}
of the process at thread $T_i$ (i.e., of process $map(i)$), updating its local state and shared variables, or
\emph{context-switch} to a different active thread by changing $i$ to a different thread-index,
provided that in $T_i$ we are not in a block of sequential statements to be executed atomically.

\paragraph{\bf Verification under bounded round schedules:}

Fix a parameterized program ${\cal P}=(S,{\tt init},\{P_i\}_{i=1}^{n})$.
The verification problem asks, given a parameterized program ${\cal P}$,
whether every execution of the program respects all assertions.

%Reachability for parameterized programs with a fixed number of shared variables
%is undecidable even if we restrict to bounded-domain variables.
% \cite{ramalingan}.
%We also recall that when we allow to parameterize also some shared variables (i.e.,
%not only threads as in our definition) then the reachability problem is
%undecidable also in absence of recursive calls and with bounded-domain variables
%\cite{kozenIPL1986}.
%
%The \emph{context-bounded reachability problem} asks whether a
%program counter is reachable in a schedule that allows only a bounded number of execution
%contexts (\emph{bounded context-switch reachability}), where each \emph{context}
%is the maximal consecutive part of a computation
%visiting only states where the active thread does not change.
%Given a program counter \pc\ and an integer $k$, the $k$-bounded
%context-switch reachability problem asks whether a state containing
%\pc\ can be reached in a computation where the active program has
%changed at most $k$ times (at most $k$ context-switches happened).
%This problem is known to be decidable when data-domains are finite % and in fact NP-complete
%\cite{QadeerRehof}.

In this paper, we consider a restricted verification problem.
A $k$-round schedule is a schedule that, for some ordering of the threads $T_1, \ldots, T_m$,
activates threads in $k$ rounds, where in each round, each thread is scheduled (for any number of steps)
according to this order. Note that an execution under a $k$-round schedule  (\emph{$k$-round execution}) 
can execute an unbounded number of steps.
%
%Fixing a thread ordering, say $T_1,\ldots,T_m$, a \emph{round} is the maximal portion of a computation where
%each thread has been activated exactly once according to the given thread ordering.
Given a parameterized program and $k \in \mathbb{N}$, the verification problem for parameterized programs
under bounded round schedules asks whether any assertion is violated in some $k$-round execution.
%Thus, fixing a thread ordering, each computation can be decomposed in a sequence of
%rounds. % followed by a prefix of a round.

%For concurrent non-parameterized programs (i.e., when the number of threads in the program
%is bounded), this notion of reachability is computationally
%equivalent to bounded context-switch reachability, in the sense that they can be
%reduced to each other. This does not hold for parameterized programs:
%%Recall that in the definition of parameterized programs
%%we have placed no bounds on the number of copies
%%of each program type  which form an instance of a parameterized program.
%due to the unboundedness of the number of copies of each process,
%the number of context switches that happen
%in each round is potentially unbounded,
%and therefore, there is a priori no bound on the number of context-switches that happen in
%an $r$-round computation of a parameterized program.

%3. Linear interfaces; linear interface theorem
\section{Linear interfaces}\label{sec:linearInterface}

\begin{figure}[tb]
%\framebox{
\begin{minipage}{228pt}
{
\small
 \setlength{\unitlength}{0.70mm}
\begin{picture}(294,70)(15,-82)

% gray big block
\node[linegray=0.6,linewidth=0.5,Nw=125.0,Nh=80.0,Nmr=0.0](n280)(92.5,-50.8){}

%primo blocco
\node[linewidth=0.4,Nw=15,Nh=70.0,Nmr=0.0](n24)(40,-48.05){}
\nodelabel[NLangle=-90.0,NLdist=39.0](n24){{$T_1$}}

%nodi primo blocco
\node[Nfill=y,fillcolor=Black,NLangle=0.0,NLdist=4,Nw=1.4,Nh=1.4,Nmr=1.0](n111)(40,-16.86){\small $s_1^1$}
\nodelabel[NLangle=180.0,NLdist=16.0](n111){{$u_1$}}
\node[Nfill=y,fillcolor=Black,NLangle=180.0,NLdist=4,Nw=1.4,Nh=1.4,Nmr=1.0](n112)(40,-26.05){\small $t_1^1$}
\node[Nfill=y,fillcolor=Black,NLangle=0.0,NLdist=4,Nw=1.4,Nh=1.4,Nmr=1.0](n113)(40,-33.05){\small $s_1^2$}
\nodelabel[NLangle=180.0,NLdist=16.0](n113){{$u_2$}}
\node[Nfill=y,fillcolor=Black,NLangle=180.0,NLdist=4,Nw=1.4,Nh=1.4,Nmr=1.0](n114)(40,-43.09){\small $t_1^2$}
\node[Nfill=y,fillcolor=Black,NLangle=0.0,NLdist=4,Nw=1.4,Nh=1.4,Nmr=1.0](n116)(40,-68.05){\small $s_1^k$}
\nodelabel[NLangle=180.0,NLdist=16.0](n116){{$u_k$}}
\node[Nfill=y,fillcolor=Black,NLangle=180.0,NLdist=4,Nw=1.4,Nh=1.4,Nmr=1.0](n117)(40,-78.05){\small $t_1^k$}

%edges ingoing the first block
\drawline[dash={1.5}0](27,-16.86)(38,-16.86)
\drawline[dash={1.5}0](27,-33.05)(38,-33.05)
\drawline[dash={1.5}0](27,-68.05)(38,-68.05)

%edges going down
\drawedge(n111,n112){}
\drawedge[AHnb=0,curvedepth=4](n112,n113){}
\drawedge(n113,n114){}
\drawedge(n116,n117){}

\drawline[AHnb=0,dash={0.5 3.5}0](40,-45)(40,-63)

% archi tratteggiati orizzontali primo blocco
%\drawedge[dash={2.0 2.0 2.0 3.0}{0.0}](n25,n34){ }
%\drawedge[dash={2.0 2.0 2.0 3.0}{0.0}](n26,n35){ }
%\drawedge[dash={2.0 2.0 2.0 3.0}{0.0}](n29,n37){ }

\drawline[AHnb=0,dash={0.5 3.5}0](23,-37)(23,-62)

\drawline[AHnb=0,dash={0.5 3.5}0](90,-37)(90,-62)

\drawline[AHnb=0,dash={0.5 3.5}0](120,-47)(120,-72)

%--------------------------------------------------------------------
% secondo blocco
\node[linewidth=0.4,Nw=15,Nh=70.0,Nmr=0.0](n80)(65,-48.05){}
\nodelabel[NLangle=-90.0,NLdist=39.0](n80){{$T_2$}}

%nodi del secondo blocco
\node[Nfill=y,fillcolor=Black,NLangle=0.0,NLdist=4,Nw=1.4,Nh=1.4,Nmr=1.0](n221)(65,-16.86){\small $s_2^1$}
\node[Nfill=y,fillcolor=Black,NLangle=180.0,NLdist=4,Nw=1.4,Nh=1.4,Nmr=1.0](n222)(65,-26.05){\small $t_2^1$}
\node[Nfill=y,fillcolor=Black,NLangle=0.0,NLdist=4,Nw=1.4,Nh=1.4,Nmr=1.0](n223)(65,-33.05){\small $s_2^2$}
\node[Nfill=y,fillcolor=Black,NLangle=180.0,NLdist=4,Nw=1.4,Nh=1.4,Nmr=1.0](n224)(65,-43.09){\small $t_2^2$}
\node[Nfill=y,fillcolor=Black,NLangle=0.0,NLdist=4,Nw=1.4,Nh=1.4,Nmr=1.0](n226)(65,-68.05){\small $s_2^k$}
\node[Nfill=y,fillcolor=Black,NLangle=180.0,NLdist=4,Nw=1.4,Nh=1.4,Nmr=1.0](n227)(65,-78.05){\small $t_2^k$}

%edges going down
\drawedge(n221,n222){}
\drawedge[AHnb=0,curvedepth=4](n222,n223){}
\drawedge(n223,n224){}
\drawedge(n226,n227){}

%edges going obliquities from the first to the second block
\drawedge[dash={1.5}0](n112,n221){}
\drawedge[dash={1.5}0](n114,n223){}
%\drawedge(n113,n114){}
\drawedge[dash={1.5}0](n117,n226){}

\drawline[AHnb=0,dash={0.5 3.5}0](65,-45)(65,-63)

% archi tratteggiati orizzontali secondo blocco
%\drawedge[dash={2.0 2.0 2.0 3.0}{0.0}](n81,n83){ }
%\drawedge[dash={2.0 2.0 2.0 3.0}{0.0}](n82,n84){ }
%\drawedge[dash={2.0 2.0 2.0 3.0}{0.0}](n85,n86){ }

%archi obliqui uscenti dal blocco due
\node[Nfill=y,fillcolor=Black,NLangle=180.0,NLdist=5,Nw=1.4,Nh=1.4,Nmr=1.0](n331)(90,-16.86){$$}
\node[Nfill=y,fillcolor=Black,NLangle=180.0,NLdist=5,Nw=1.4,Nh=1.4,Nmr=1.0](n333)(90,-33.05){$$}
\node[Nfill=y,fillcolor=Black,NLangle=180.0,NLdist=5,Nw=1.4,Nh=1.4,Nmr=1.0](n336)(90,-68.05){$$}

%edges going obliquities from the first to the second block
\drawedge[dash={1.5}0](n222,n331){}
\drawedge[dash={1.5}0](n224,n333){}
\drawedge[dash={1.5}0](n227,n336){}

%--------------------------------------------------------------------
%ultimo blocco
\node[linewidth=0.4,Nw=15,Nh=70.0,Nmr=0.0](n45)(145,-48.05){}
\nodelabel[NLangle=-90.0,NLdist=39.0](n45){{$T_m$}}

%nodi ultimo blocco
\node[Nfill=y,fillcolor=Black,NLangle=0.0,NLdist=4,Nw=1.4,Nh=1.4,Nmr=1.0](n551)(145,-16.86){\small $s_m^1$}
\node[Nfill=y,fillcolor=Black,NLangle=180.0,NLdist=4,Nw=1.4,Nh=1.4,Nmr=1.0](n552)(145,-26.05){\small $t_m^1$}
\nodelabel[NLangle=0.0,NLdist=18.0](n552){{$v_1$}}
\node[Nfill=y,fillcolor=Black,NLangle=0.0,NLdist=4,Nw=1.4,Nh=1.4,Nmr=1.0](n553)(145,-33.05){\small $s_m^2$}
\node[Nfill=y,fillcolor=Black,NLangle=180.0,NLdist=4,Nw=1.4,Nh=1.4,Nmr=1.0](n554)(145,-43.09){\small $t_m^2$}
\nodelabel[NLangle=0.0,NLdist=18.0](n554){{$v_2$}}

\node[Nfill=y,fillcolor=Black,NLangle=0.0,NLdist=4,Nw=1.4,Nh=1.4,Nmr=1.0](n556)(145,-68.05){\small $s_m^k$}
\node[Nfill=y,fillcolor=Black,NLangle=180.0,NLdist=4,Nw=1.4,Nh=1.4,Nmr=1.0](n557)(145,-78.05){\small $t_m^k$}
\nodelabel[NLangle=0.0,NLdist=18.0](n557){{$v_k$}}

% archi tratteggiati orizzontali ultimo blocco
%\drawedge[dash={2.0 2.0 2.0 3.0}{0.0}](n46,n48){ }
%\drawedge[dash={2.0 2.0 2.0 3.0}{0.0}](n47,n49){ }
%\drawedge[dash={2.0 2.0 2.0 3.0}{0.0}](n50,n51){ }

%edges going down last block
\drawedge(n551,n552){}
\drawedge[AHnb=0,curvedepth=4](n552,n553){}
\drawedge(n553,n554){}
\drawedge(n556,n557){}

%archi entranti nell'ultimo blocco
\node[Nfill=y,fillcolor=Black,NLangle=180.0,NLdist=5,Nw=1.4,Nh=1.4,Nmr=1.0](n442)(120,-26.05){}
\node[Nfill=y,fillcolor=Black,NLangle=180.0,NLdist=5,Nw=1.4,Nh=1.4,Nmr=1.0](n444)(120,-43.09){}
\node[Nfill=y,fillcolor=Black,NLangle=180.0,NLdist=5,Nw=1.4,Nh=1.4,Nmr=1.0](n447)(120,-78.05){}
\drawedge[dash={1.5}0](n442,n551){}
\drawedge[dash={1.5}0](n444,n553){}
\drawedge[dash={1.5}0](n447,n556){}

%edges outgoing the last block
\drawline[dash={1.5}0](147,-26.05)(160,-26.05)
\drawline[dash={1.5}0](147,-43.09)(160,-43.09)
\drawline[dash={1.5}0](147,-78.05)(160,-78.05)

\drawline[AHnb=0,dash={0.5 3.5}0](145,-45)(145,-63)

\drawline[AHnb=0,dash={0.5 3.5}0](162,-47)(162,-72)

%--------------------------------------------------------------------

% archi fra il primo ed il secondo blocco
%\drawedge(n34,n81){}
%\drawedge(n35,n82){}
%\drawedge(n37,n85){}

\drawline[AHnb=0,dash={0.5 3.5}0](95,-21.86)(115,-21.86)
\drawline[AHnb=0,dash={0.5 3.5}0](95,-38.05)(115,-38.05)
\drawline[AHnb=0,dash={0.5 3.5}0](95,-73.05)(115,-73.05)

% legend
\drawline[dash={1.5}0](177,-16.86)(188,-16.86)
\node[Nfill=y,fillcolor=Black,NLangle=0.0,NLdist=4,Nw=0.1,Nh=0.1,Nmr=1.0](n)(215,-16.86){\scriptsize{have same shared state}}

\drawline[AHnb=1](181,-31.86)(181,-41.86)
\node[Nfill=y,fillcolor=Black,NLangle=0.0,NLdist=4,Nw=0.1,Nh=0.1,Nmr=1.0](n)(215,-36.86){\scriptsize{local computation}}
\node[Nfill=y,fillcolor=White,NLangle=0.0,NLdist=4,Nw=0.1,Nh=0.1,Nmr=1.0](n)(215,-40.86){\scriptsize{(arbitrarily many events)}}

\node[Nfill=y,fillcolor=Black,NLangle=0.0,NLdist=4,Nw=0.1,Nh=0.1,Nmr=1.0](n1)(181,-55.86){}
\node[Nfill=y,fillcolor=Black,NLangle=0.0,NLdist=4,Nw=0.1,Nh=0.1,Nmr=1.0](n2)(181,-61.86){}
\drawedge[AHnb=0,curvedepth=4](n1,n2){}
\node[Nfill=y,fillcolor=Black,NLangle=0.0,NLdist=4,Nw=0.1,Nh=0.1,Nmr=1.0](n)(215,-58.86){\scriptsize{have same local state}}

%\drawline[AHnb=0,dash={2.0 2.0 2.0 3.0}{0.0}](27.37,-55.05)(27.37,-68.05)
%\drawline[AHnb=0,dash={2.0 2.0 2.0 3.0}{0.0}](40.37,-45.05)(40.37,-65.05)
%\drawline[AHnb=0,dash={2.0 2.0 2.0 3.0}{0.0}](57.99,-45.05)(57.99,-65.05)
%\drawline[AHnb=0,dash={2.0 2.0 2.0 3.0}{0.0}](75.88,-45.05)(75.88,-65.05)
%\drawline[AHnb=0,dash={2.0 2.0 2.0 3.0}{0.0}](105.0,-45.05)(105.0,-65.05)
%\drawline[AHnb=0,dash={2.0 2.0 2.0 3.0}{0.0}](133.5,-45.05)(133.5,-65.05)
%\drawline[AHnb=0,dash={2.0 2.0 2.0 3.0}{0.0}](147.5,-39.05)(147.5,-55.05)

\end{picture}
}
\end{minipage}
%}

\caption{A linear interface}\label{fig:theoLI}
\end{figure}
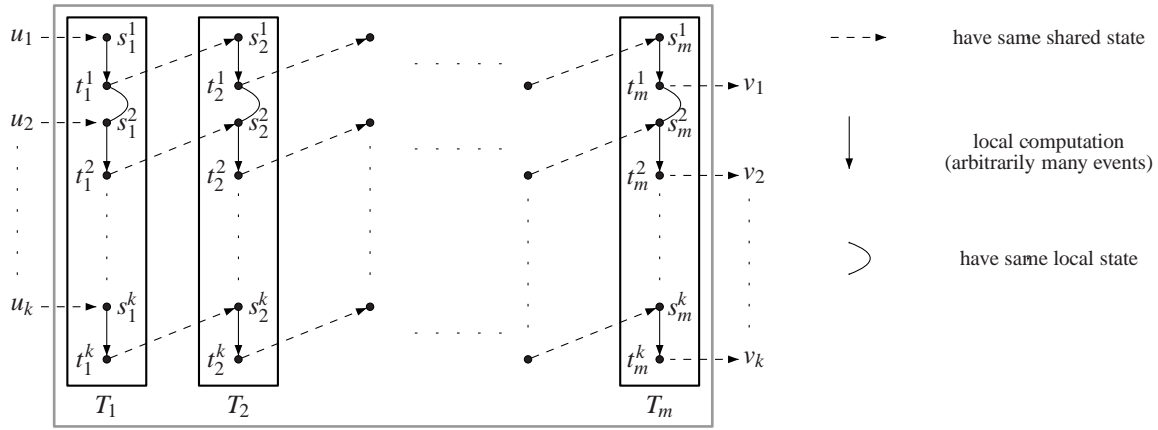

We now introduce the concept of a linear interface, which captures
the effect a block of processes has on the shared state, when involved in a $k$-round execution. 
 The notion of linear interfaces will play a major
role in the lazy conversion to sequential programs.

We fix a parameterized program ${\cal P}=(S,{\tt init},\{P_i\}_{i=1}^{n})$
and a bound $k>0$ on the number of rounds.

Notation: let $\overline{u}=(u_1,\ldots,u_k)$, where each $u_i$ is a shared state of ${\cal P}$.

A pair of $k$-tuples of shared variables $(\overline{u}, \overline{v})$ is
a \emph{linear interface} of length $k$ (see Figure~1) if:
{\bf (a)} there is an ordered block of threads $T_1,\ldots,T_m$ (running processes of ${\cal P}$),
{\bf (b)} there are $k$ rounds of execution, where each execution starts from shared state $u_i$,
exercises the threads in the block one by one, and ends with shared state $v_i$ (for example,
in Figure~1, the first round takes $u_1$ through states $s_1^1$, $t_1^1$, $s_2^1$, $t_2^1, \ldots$ to
$t_m^1$ where the shared state is $v_1$),
and {\bf (c)} the  local state of threads is preserved between consecutive rounds
 (in Figure~1, for example, $t_1^1$ and $s_1^2$ have the same local state).
Informally, a linear interface is the \emph{effect} a block of threads
can have on the shared state in a $k$-round execution, in that they transform
$\overline{u}$ to $\overline{v}$ across the block.

Formally, we have the following definition
(illustrated by Figure~\ref{fig:theoLI}).

\begin{definition}({\sc Linear interface})~\cite{CAV10}~\\
Let $\overline{u}=(u_1,\ldots,u_k)$ and
$\overline{v}=(v_1,\ldots,v_k)$ be tuples of $k$ shared states of a parameterized
program $\cal P$ (with processes $P$).\\
%and $P$ be a process of $\cal P$.
The pair $(\overline{u},\overline{v})$ is a \emph{linear interface} of $\cal P$ of length $k$
if there is some number of threads $m \in \mathbb{N}$,
an assignment of threads to processes $map:[1,m] \rightarrow P$ and
states $s_i^j=(map,i,x_i^j,\sigma_1^{i,j},\ldots,\sigma_m^{i,j})$ and
$t_i^j=(map,i,y_i^j,\gamma_1^{i,j},\ldots,\gamma_m^{i,j})$ of $\cal P$
for  $i\in [1,m]$ and $j \in [1,k]$, such that for each $i\in [1,m]$:
\begin{itemize}
\item $x_1^j = u_j$ and $y_m^j=v_j$, for each $j \in [1,k]$;
\item $t_i^j$ is reachable from $s_i^j$  using only local transitions of process $map(i)$, for each $j \in [1,k]$;
\item $\sigma_i^{i,1}$ is an initial local state for process $map(i)$;
\item $\sigma_i^{i,j+1}=\gamma_i^{i,j}$ for each $j \in [1,k-1]$ (local states are preserved across rounds);
\item $x_{i+1}^j= y_i^j$, except when $i=m$ (shared states are preserved between context-switches of a single round);
\item $(t_{i}^j, s_{i+1}^j)$, except when $i=m$, is a context-switch.\hfill\qed
\end{itemize}
%Moreover, we say that a program counter $\pc$ is \emph{reachable within}
%$(\overline{g},\overline{h})$ if it is reachable along a local execution
%from $s_i^j$ to $t_i^j$, for some $i=1,\ldots,m$ and $j=1,\ldots,k$.

%When $m=1$, $(\overline{u},\overline{v})$ is also called a \emph{thread linear interface}.\hfill\qed
\end{definition}

%In other words, for a given process $P$,
%linear interfaces of length $k$ capture the shared states when
%context-switching into and out from each thread $T$ which is an instance of $P$ along
%any $k$-round execution of $\cal P$.

%According to the above definition,
%the box corresponding to thread $T_i$ in Figure~\ref{fig:theoLI}
%is a graphical representation of the thread linear interface
%$(\overline{g}_i,\overline{h}_i)$. In general, in this example
%each $(\overline{g}_i,\overline{h}_j)$, with
%$i\le j$, is a linear interface.

Note that  the above definition of a linear interface %$(\overline{u},\overline{v})$ 
places no restriction on the relation between $v_j$ and $u_{j+1}$--- all that we require is that the block of threads
must take as input $\overline{u}$ and compute $\overline{v}$ in the $k$ rounds, preserving the
local configuration of threads between rounds.

%Let $(\overline{g},\overline{h})$ and $(\overline{g}',\overline{h}')$ be two linear interfaces
%of length $k$. %, and denote $\overline{h}=(h_1,\ldots,h_k)$ and $\overline{g}'=(g'_1,\ldots,g'_k)$.
%If $\overline{h}=\overline{g}'$ holds, then the  \emph{composition} of
%$(\overline{g},\overline{h})$ and $(\overline{g}',\overline{h}')$,
%denoted as $(\overline{g},\overline{h})\circ(\overline{g}',\overline{h}')$,
%is the pair $(\overline{h},\overline{g}')$.
%obtained by linking the two interfaces from $h_i$ to $g'_i$ for $i=1,\ldots,k$.
\ignore{
such that:
denoting with $t_i$ a state whose shared state is $h_i$ and satisfies the properties stated
in the definition of linear interface with respect to $(\overline{g},\overline{h})$, and
with $s'_i$ a state whose shared state is $g'_i$ and satisfies the properties stated
in the definition of linear interface with respect to $(\overline{g}',\overline{h}')$,
}

%Directly from definitions, we have the following result.
%\begin{lemma}\label{lemma:comp}
%Let $(\overline{g},\overline{h})$ and $(\overline{g}',\overline{h}')$ be two linear interfaces
%of length $k$ such that $\overline{h}=\overline{g}'$ holds.
%Then, $(\overline{g},\overline{h})\circ(\overline{g}',\overline{h}')$ is the linear interface
%$(\overline{g},\overline{h}')$, and is also of length $k$.
%\end{lemma}

%From the above lemma and the definition of linear interfaces we have the following:
%\begin{lemma}
%Each linear interface is either a thread linear interface or a composition of two or more
%thread linear interfaces.
%\end{lemma}
%
%As an example of application of the above lemmas,
%consider again the execution in Figure~\ref{fig:theoLI}.
%Since $\overline{g}_{i+1}=\overline{h}_i$ holds for each $i=1,\ldots,m-1$,
%we have that the linear interface $(\overline{g}_1,\overline{h}_2)$
%is $(\overline{g}_1,\overline{h}_1)\circ(\overline{g}_2,\overline{h}_2)$,
%and the linear interface $(\overline{g}_1,\overline{h}_m)$ is
%$(\overline{g}_1,\overline{h}_1)\circ(\overline{g}_2,\overline{h}_2)\circ
%\ldots\circ (\overline{g}_m,\overline{h}_m)$.

%\begin{figure}[h]
\ignore{
\begin{wrapfigure}{r}{4.2truecm}
\hspace*{-0.6truecm}
\framebox{
\begin{minipage}{4.4cm}
%\begin{boxedminipage}{12.2truecm}
{
\tiny
 \setlength{\unitlength}{0.35mm}
\begin{picture}(254,80)(-5,-89)
%FIRST BOX
\node[linewidth=0.7,Nw=40,Nh=81.0,Nmr=0.0](n24)(28.5,-54.55){}
\nodelabel[NLangle=90.0,NLdist=45](n24){}%{\scriptsize $LB$}}
\node[Nfill=y,fillcolor=Black,NLangle=180.0,NLdist=7,Nw=2.0,Nh=2.0,Nmr=1.0](n25)(8.49,-19.86){$g_1$}
\node[Nfill=y,fillcolor=Black,NLangle=180.0,NLdist=7,Nw=2.0,Nh=2.0,Nmr=1.0](n26)(8.46,-34.05){$g_2$}
\node[Nfill=y,fillcolor=Black,NLangle=180.0,NLdist=7,Nw=2.0,Nh=2.0,Nmr=1.0](n176)(8.46,-49.09){$g_3$}
\node[Nfill=y,fillcolor=Black,NLangle=180.0,NLdist=10,Nw=2.0,Nh=2.0,Nmr=1.0](n29)(8.46,-76.05){$g_{i-1}$}
\node[Nfill=y,fillcolor=Black,NLangle=180.0,NLdist=7,Nw=2.0,Nh=2.0,Nmr=1.0](n29)(8.46,-91.08){$g_{i}$}

\node[Nfill=y,fillcolor=Black,NLangle=28.0,NLdist=8,Nw=2.0,Nh=2.0,Nmr=1.0](n27)(48.46,-19.86){$h_1$}
\node[Nfill=y,fillcolor=Black,NLangle=28.0,NLdist=8,Nw=2.0,Nh=2.0,Nmr=1.0](n28)(48.46,-34.05){$h_2$}
\node[Nfill=y,fillcolor=Black,NLangle=28.0,NLdist=10,Nw=2.0,Nh=2.0,Nmr=1.0](n30)(48.46,-76.05){$h_{i-1}$}
\node[Nfill=y,fillcolor=Black,NLangle=0.0,NLdist=8,Nw=2.0,Nh=2.0,Nmr=1.0](n29)(48.46,-91.08){$h_{i}$}

%SECOND BOX
\node[linewidth=0.7,Nw=40,Nh=70.0,Nmr=0.0](n31)(87.99,-49.05){}
\nodelabel[NLangle=90.0,NLdist=38.5](n31){}%\scriptsize {$RB$}}

\node[Nfill=y,fillcolor=Black,NLangle=0.0,NLdist=6,Nw=2.0,Nh=2.0,Nmr=1.0](n32)(68.11,-19.86){$g'_1$}
\node[Nfill=y,fillcolor=Black,NLangle=0.0,NLdist=6,Nw=2.0,Nh=2.0,Nmr=1.0](n33)(68.08,-34.05){$g'_2$}
\node[Nfill=y,fillcolor=Black,NLangle=0.0,NLdist=10,Nw=2.0,Nh=2.0,Nmr=1.0](n36)(68.08,-76.05){$g'_{i-1}$}

%EDGES from first block to the second block
\drawedge(n27,n32){}
\drawedge(n28,n33){}
\drawedge(n30,n36){}

\node[Nfill=y,fillcolor=Black,NLangle=0.0,NLdist=7,Nw=2.0,Nh=2.0,Nmr=1.0](n48)(108,-19.86){$h_1'$}
\node[Nfill=y,fillcolor=Black,NLangle=0.0,NLdist=7,Nw=2.0,Nh=2.0,Nmr=1.0](n49)(108,-34.05){$h_2'$}
\node[Nfill=y,fillcolor=Black,NLangle=0.0,NLdist=10,Nw=2.0,Nh=2.0,Nmr=1.0](n94)(108,-61.86){$h_{i-2}'$}
\node[Nfill=y,fillcolor=Black,NLangle=0.0,NLdist=10,Nw=2.0,Nh=2.0,Nmr=1.0](n51)(108,-76.05){$h_{i-1}'$}

%DASHED LINES
\drawline[AHnb=0,dash={2.0 2.0 2.0 3.0}{0.0}](1,-53)(1,-70)
%\drawline[AHnb=0,dash={2.0 2.0 2.0 3.0}{0.0}](29,-45.05)(29,-65.05)
\drawline[AHnb=0,dash={2.0 2.0 2.0 3.0}{0.0}](57.37,-45.05)(57.37,-65.05)
%\drawline[AHnb=0,dash={2.0 2.0 2.0 3.0}{0.0}](88,-43)(88,-63)
\drawline[AHnb=0,dash={2.0 2.0 2.0 3.0}{0.0}](114,-39)(114,-54)

%WRAP LINES
\drawline[AHnb=1](108,-19.86)(104.61,-23.86)(12.46,-27.05)(9.16,-33.05)
\drawline[AHnb=1](108,-34.86)(104.61,-38.86)(12.46,-42.05)(9.16,-48.05)
\drawline[AHnb=1](108,-61.86)(104.61,-65.86)(12.46,-69.05)(9.16,-75.05)
\drawline[AHnb=1](108,-76.86)(104.61,-80.86)(12.46,-84.05)(9.16,-90.05)
\end{picture}
}

\end{minipage}
}
%\hspace*{-0.9truecm}
\caption{A $\cal P$ execution from composing two linear interfaces.}\label{fig:lemmaLI}
%\end{figure}
\vspace{-0.4truecm}
\end{wrapfigure}
END IGNORE}

%We  introduce some notation that will be used in the rest of the paper.

A linear interface $(\overline{u}, \overline{v})$ of length $k$ is  \emph{wrapped}
if $v_i = u_{i+1}$ for each $i \in[1,k-1]$, 
%Let $j\le i$, we say that $\overline{g}$ \emph{stitches to}
%$\overline{h}$ if $g_r = h_{r}$ for each $r=1,\ldots,j$.
%A linear interface $(\overline{u}, \overline{v})$ 
and is  \emph{initial}
if $u_1$%, the first component of $\overline{u}$,
 is an initial shared state of $\cal P$.
%For example, in Figure~\ref{fig:theoLI}, $\overline{h}_m$ folds back on $\overline{g}_1$ and
%$\overline{g}_1$ is initial. Let $\overline{h}=(h_1,\ldots,h_{i})$
%and $\overline{g}'=(g'_1,\ldots,g'_{i-1})$ be as in Figure~\ref{fig:lemmaLI},
%$\overline{h}$ stitches to $\overline{g}'$.

For a wrapped initial linear interface, from the definition of linear interfaces it follows that
the $k$ pieces of execution demanded in the definition
can be stitched together to get a complete execution of the parameterized program, that starts from
an initial state. We say that an execution \emph{conforms} to a particular linear interface if it meets
the condition demanded in the definition. %The following lemma is obvious:

\begin{lemma}\label{wrapped-lin-int}\cite{CAV10}~~
Let ${\cal P}$ be a parameterized program. An execution of ${\cal P}$ is under a $k$-round schedule
iff it conforms to some wrapped initial linear interface of ${\cal P}$ of length $k$.\hfill\qed
\end{lemma}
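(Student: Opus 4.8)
The plan is to prove the two implications by unfolding the definition of a $k$-round schedule and matching it, segment by segment, against the data witnessing a wrapped initial linear interface. The guiding observation is that a $k$-round execution decomposes canonically into $k\cdot m$ maximal local computation segments --- one for each thread in each round --- separated by context-switches, and that the clauses of the linear-interface definition record precisely the shared- and local-state values at the endpoints of these segments (the $x_i^j, y_i^j$ and $\sigma_i^{i,j}, \gamma_i^{i,j}$ of Figure~\ref{fig:theoLI}).

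For the forward direction I would fix an execution $\rho$ under a $k$-round schedule together with its witnessing thread order $T_1,\ldots,T_m$. For each $i \in [1,m]$ and $j \in [1,k]$, let $s_i^j$ be the state at which $T_i$ becomes active for the $j$-th time and $t_i^j$ the state at which it yields at the end of that turn, and let $x_i^j, y_i^j$ be the corresponding shared states. Since each turn uses only transitions of process $map(i)$, the state $t_i^j$ is reachable from $s_i^j$ by local transitions, as required. Setting $u_j = x_1^j$ and $v_j = y_m^j$, the remaining clauses then follow from three elementary facts about the single run $\rho$: (i) the shared state is unchanged across every context-switch, yielding $x_{i+1}^j = y_i^j$ within a round and, across the switch from $T_m$ in round $j$ to $T_1$ in round $j+1$, the wrapping identity $v_j = u_{j+1}$; (ii) a thread's local state is frozen while it is descheduled, giving $\sigma_i^{i,j+1} = \gamma_i^{i,j}$; and (iii) $\rho$ begins right after {\tt init} with every thread in an initial local state, so each $\sigma_i^{i,1}$ is initial and $u_1$ is an initial shared state, making the interface initial. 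Hence $\rho$ conforms to the wrapped initial linear interface $(\overline{u},\overline{v})$.

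For the converse I would assume $\rho$ conforms to some wrapped initial linear interface $(\overline{u},\overline{v})$, witnessed by states $s_i^j, t_i^j$ as in the definition. Conformance already supplies the consistency of every handoff --- consecutive segments within a round agree on the shared state via $x_{i+1}^j = y_i^j$, the last segment of round $j$ joins the first of round $j+1$ via wrapping ($v_j = u_{j+1}$), and descheduled threads carry their local states forward via $\sigma_i^{i,j+1} = \gamma_i^{i,j}$ --- so $\rho$ is a genuine continuous run of ${\cal P}$ rather than a disconnected collection of segments. Initiality of $u_1$ and of the $\sigma_i^{i,1}$ places its first state in a legitimate post-{\tt init} configuration, and by construction $\rho$ schedules $T_1,\ldots,T_m$ in this fixed order for exactly $k$ rounds; thus it is a $k$-round execution.

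Most of this is routine bookkeeping, and I expect the only point needing care to be the legality of context-switches, since a switch may occur only outside an atomic block. This is already subsumed by the definition's requirement that each $(t_i^j, s_{i+1}^j)$ be a \emph{context-switch}: the yield states $t_i^j$ are by definition not inside an atomic block, so in the converse direction the run never attempts an illegal switch, while in the forward direction the harvested endpoints satisfy this constraint automatically. Since both directions reason about the \emph{same} $\rho$, establishing the equivalence amounts to checking that these endpoint data can be read off from $\rho$ exactly when $\rho$ has $k$-round shape.
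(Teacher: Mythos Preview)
Your proposal is correct and follows the natural approach: the paper itself does not give a proof here (the lemma is cited from~\cite{CAV10} and marked with a bare \qed), but the brief justification preceding the lemma --- that the $k$ execution pieces of a wrapped initial interface can be ``stitched together'' into a single run starting from an initial state --- is exactly your converse direction, and the forward direction is the obvious decomposition you describe. Your handling of the atomic-block side condition via the context-switch clause is also the right observation.
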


Hence to verify a program ${\cal P}$ under $k$-round schedules,
it suffices to check for failure of assertions along executions that conform to some wrapped initial
interface of length $k$.

\ignore{ Theorem with composing i-LI with (i-1)-LI with wrapping
If
$(\overline{g},\overline{h})$ and
$(\overline{g}',\overline{h}')$ are linear interfaces of $\cal P$ of length $k$ and
such that $(\overline{g},\overline{h})$ is initial,
$\overline{h}$ stitches to $\overline{g}'$, $\overline{h}'$ folds back on $\overline{g}$,
then by definition $(\overline{g},\overline{h}')$ is a linear interface.
Therefore as a corollary of the above theorem we get:

\begin{theorem}\label{theorem:linearInterface}
A program counter $\pc$ of $\cal P$ is reachable within $k$ rounds
if and only if
there are linear interfaces
$(\overline{g},\overline{h})$ and
$(\overline{g}',\overline{h}')$  of $\cal P$ such that
$(\overline{g},\overline{h})$ is initial and of length $k$,
$\overline{h}$ stitches to $\overline{g}'$, $\overline{h}'$ folds back on $\overline{g}$,
and $\pc$ is reachable within either
$(\overline{g},\overline{h})$ or $(\overline{g}',\overline{h}')$.
\end{theorem}

\begin{proof}
%We have two cases: either both interfaces are of length $k$ or $(\overline{g},\overline{h})$
%is of length $k$ and $(\overline{g}',\overline{h}')$ is of length $k-1$.
%
Let $\pi$ be a $k$-round execution of $\cal P$ involving threads $T_1,\ldots,T_m$.
We have two cases: either the last round is complete or not.
Suppose that the last round of $\pi$ is not complete (the other case is similar and we omit it).
Thus, $\pi$ ends into a state where the active thread is
$T_i$ with $i<m$. Denote with $\overline{g}=(g_1,\ldots,g_k)$
the shared states at the beginning of each round
and with $\overline{h}=(h_1,\ldots,h_k)$
  the shared states when context-switching out of $T_i$. Moreover, denote
  with $\overline{g}'=(g_1',\ldots,g_{k-1}')$
  the shared states when context-switching into $T_i$ and with
$\overline{h}'=(h_1',\ldots,h_{k-1}')$
  the shared states when context-switching out of $T_m$ (i.e., at the end of each round).
  Execution $\pi$ and the tuples $\overline{g},\overline{g}',\overline{h},\overline{h}'$
  are illustrated in Figure~\ref{fig:lemmaLI}.
Thus, is simple to verify
 that $(\overline{g},\overline{h})$
 and $(\overline{g}',\overline{h}')$ are linear interfaces,
 $(\overline{g},\overline{h})$ is initial,
 $\overline{h}$ stitches to $\overline{g}'$, and
 $\overline{h}'$ folds back on $\overline{g}$.
 Moreover, suppose that $\pc$ is visited along
 $\pi$, then if this happens when one of threads $T_1,\ldots,T_i$ is executing,
 then clearly $\pc$ is reachable within $(\overline{g},\overline{h})$, otherwise
 $\pc$ is reachable within $(\overline{g}',\overline{h}')$.

Consider now the converse direction. Again, we have two cases either $(\overline{g}',\overline{h}')$
is of length $k-1$ or is of length $k$. Suppose the first case hold (the other case is similar).
To prove the existence of a $k$ round execution of $\cal P$, we construct an execution $\pi$
according to the structure shown in Figure~\ref{fig:lemmaLI}.
Since $(\overline{g},\overline{h})$ is a linear interface,
there are threads $T_1,\ldots,T_i$ and portions of executions $\pi_1,\ldots,\pi_k$ each
corresponding to one round of $T_1,\ldots,T_i$
such that $\pi_r$ connects $g_r$ to $h_r$ for $r=1,\ldots,k$.
Similarly, for $(\overline{g}',\overline{h}')$,
there are threads $T'_1,\ldots,T'_j$ and portions of executions $\pi'_1,\ldots,\pi'_{k-1}$
each
corresponding to one round of $T'_1,\ldots,T'_j$
such that $\pi'_r$ connects $g'_r$ to $h'_r$ for $r=1,\ldots,k-1$.
Since the local behavior of each thread is oblivious of the behavior of the other threads,
we can obtain the wished execution by stitching together $\pi_1,\ldots,\pi_k$,
$\pi'_1,\ldots,\pi'_{k-1}$ with the control switches displayed as arrows in the figure
and which are possible for the hypothesis $\overline{h}$ stitches to $\overline{g}'$
and $\overline{h}'$ folds back on $\overline{g}$.
Now, if $\pc$ is reachable within $(\overline{g},\overline{h})$ or $(\overline{g}',\overline{h}')$,
we make the appropriate choice of $\pi_1,\ldots,\pi_k$,
$\pi'_1,\ldots,\pi'_{k-1}$ to ensure that $\pc$ is also visited along $\pi$.
Finally, since $(\overline{g},\overline{h})$ is initial,
we get that also the constructed $\pi$ starts from an initial state,
therefore, we can conclude that \pc\ is reachable in $\cal P$ within $k$ rounds.
\qed
\end{proof}
END IGNORE}

\ignore{
\begin{theorem}\label{theo:linearInterface}
%Let $\cal P$ be a parameterized program.
%
A program counter \pc\ is reachable in $\cal P$ within $k$ rounds if and only if
there is a linear interface $(\overline{g},\overline{h})$ of $\cal P$
of length $k$ such that $\overline{h}$ folds back on $\overline{g}$ and
 $\pc$ is reachable within $(\overline{g},\overline{h})$.

\ignore{
there is a sequence of linear interfaces $(\overline{g}_1,\overline{h}_1), \ldots,
(\overline{g}_m,\overline{h}_m)$ of $\cal P$,  each of length $k$, such that:
\begin{enumerate}
\item $\overline{h}_i = \overline{g}_{i+1}$ for each $i=1,\ldots,m-1$;
\item $\overline{h}_m$ folds back on $\overline{g}_1$;
%$h_m^j = g_1^{j+1}$ for each $j=1,\ldots,k-1$, where $\overline{g}_1=(g_1^1,\ldots,g_1^k)$
%and $\overline{h}_m=(h_m^1,\ldots,h_m^k)$;
\item $\pc$ is reachable within $(\overline{g}_j,\overline{h}_j)$, for some $j=1,\ldots,m$.
\end{enumerate}
}
\end{theorem}
\begin{proof}
Let $\pi$ be a $k$-round execution of $\cal P$ involving threads $T_1,\ldots,T_m$ as shown in Figure~\ref{fig:theoLI}.
Denoting $\overline{g}_i=(g_i^1,\ldots,g_i^k)$ and $\overline{h}_i=(h_i^1,\ldots,h_i^k)$ for $i=1,\ldots,m$,
it is simple to verify that $(\overline{g}_i,\overline{h}_i)$ is a linear interface for each $i=1,\ldots,m$, and
by the definition of linear interface, that properties $1$ and $2$ of the theorem must hold.
Now, if $\pc$ is reachable on $\pi$, then it is simple to verify that there must be a  $j\in\{1,\ldots,m\}$ such
that $\pc$ is also reachable within $(\overline{g}_j,\overline{h}_j)$.

Suppose now that there are  $m$ linear interfaces of $\cal P$ of length $k$ $(\overline{g}_1,\overline{h}_1), \ldots,
(\overline{g}_m,\overline{h}_m)$
such that properties $1,\ 2$ and $3$ of the theorem hold.
Let $\overline{g}_i=(g_i^1,\ldots,g_i^k)$ and $\overline{h}_i=(h_i^1,\ldots,h_i^k)$ for $i=1,\ldots,m$.
To prove the existence of a $k$ round execution of $\cal P$, we construct an execution $\pi$
according to the structure shown in Figure~\ref{fig:theoLI}.
First, observe that from the definition of linear interface, for $i=1,\ldots,m$,
there is a thread $T_i$ of $\cal P$ such that starting from an initial configuration and shared state $g_i^1$ there
is a local execution of $T_i$ that reaches the shared state $h_i^1$, then replacing the shared state with
$g_i^2$, it reaches the shared state $h_i^2$ and so on up to reaching shared state $h_i^k$.
Thus, we have shown the existence of parts of a $\cal P$ execution which correspond
to the local behavior of threads $T_1,\ldots,T_m$. Property $1$ ensures that these parts can be stitched
together on the shared states of the linear interfaces such that we obtain portions of a $\cal P$ execution
each corresponding to a whole round of $T_1,\ldots,T_m$.
Property $2$ ensures that these portions can be stitched together starting from round $1$ up to round
$k$, and thus we obtain an execution of $\cal P$.
Now, if $\pc$ is reachable within $(\overline{g}_j,\overline{h}_j)$ for some $j\in\{1,\ldots,m\}$,
then we can assume that in the above construction for such $T_j$ we use parts of an execution which
reach $\pc$. Therefore, also the resulting execution of $\cal P$ visits $\pc$.
\qed
\end{proof}
}

\section{Sequentializing parameterized programs}\label{concToSeq}
In this section, we present a  sequentialization of parameterized
programs that preserves assertion satisfaction. Our translation is
``lazy''  in that the states reachable in the
resulting program correspond to reachable states of the parameterized program.
Thus, it preserves invariants across the translation: an invariant that
holds at a particular statement in the concurrent program will hold at the corresponding
statement in the sequential program.

A simpler \emph{eager} sequentialization scheme for parameterized programs that reduces reachability of error
states for parameterized programs but explores \emph{unreachable states} as well,
can be obtained by a simple adaptation of the translation from concurrent programs with finitely
many threads to sequential programs given in~\cite{LalReps}.
This scheme consists of simulating each thread till completion across \emph{all the rounds},
before switching to the next thread, and then, at the end, checking if the execution
of all the threads corresponds to an actual execution of the parameterized program.
Nondeterminism is used to guess the number of threads, the schedule, and the
shared state at the beginning of each round. However, this translation explores unreachable
states, and hence does not preserve assertions across the translation.
%We refer the reader to Appendix~\ref{eager} for an account of this translation.

%
%Both the above conversions will result in a sequential program that uses, at any point,
%only the local variables of a single process, and a bounded number of copies of shared
%variables.
%
%Also, in this section, we extend the use of linear interfaces and the conversion from
%parameterized programs to sequential programs to handle reachability
%in \emph{any number of rounds}---
%this resulting sequential program, however, uses dynamic heaps
%to store unboundedly large linear interfaces.

\paragraph{\bf Motivating laziness:}
A lazy translation that explores only localized states reachable by a parameterized program
has obvious advantages over an eager translation. % as given above.
For example, if we subject the sequential program to model-checking using state-space exploration, the
lazy sequentialization has fewer reachable states to explore.
The lazy sequentialization has another interesting consequence, namely that the sequential
program will not explore unreachable parts of the state-space where invariants of the parameterized
program get violated or where executing statements leads to system errors due to undefined semantics
(like division-by-zero, buffer-overflows, etc.), as illustrated by the following example.

\begin{figure}[tb]
\hspace*{50pt}
\framebox{
\begin{minipage}{340pt}
%\emph{// Shared variables}\\
{\tt bool} $\mathit{blocked}:=T${\tt;}\\
{\tt int}~$x:=0,~y:=0${\tt ;}

%\vspace*{-0.3truecm}
\begin{multicols}{2}
{\tt process~$P_1$:}\\
%{\tt begin}\\
\hspace*{0.3truecm}{\tt main(\,) } {\tt begin}\\
%\hspace*{0.3truecm}{\tt begin}\\
\hspace*{0.6truecm}{\tt while\,$(\mathit{blocked})$ do\\
\hspace*{0.9truecm}  skip;\\
\hspace*{0.6truecm}od}\\
%\hspace*{0.9truecm}{\tt~ skip; od}\\
\hspace*{0.6truecm}{\tt assert(y!=0);}\\
\hspace*{0.6truecm}{\tt $x:=x/y$;} \\
%\hspace*{0.6truecm}{\tt assert($x\%2=1$)}\\
%\hspace*{0.9truecm}\Goal{\tt : skip;}\\
%\hspace*{0.6truecm}{\tt fi}\\
\columnbreak\hspace*{0.3truecm}{\tt end}

%\space*{-0.3truecm}
{\tt process~$P_2$:}\\
%{\tt process} $P_2$\\
%{\tt begin}\\
\hspace*{0.3truecm}{\tt main(\,) } {\tt begin}\\
%\hspace*{0.3truecm}{\tt begin}\\
\hspace*{0.6truecm}{\tt $x:=12$;}\\
\hspace*{0.6truecm}{\tt $y:=2$;}\\
%\hspace*{0.6truecm}{\tt $y:=2$;}~\emph{ // unblock $P_1$}\\
\hspace*{0.6truecm}{\tt $\mathit{blocked}:=F$;~\emph{ //unblock $P_1$}}\\
\hspace*{0.3truecm}{\tt end}
%{\tt end}\\
\end{multicols}
\end{minipage}
}
%\vspace{-0.2cm}
\caption{Assertion not preserved by the eager sequentialization.}\label{fig:invariants}
\end{figure}
%\vspace*{-0.2cm}

\begin{example}
Consider an execution of the parameterized program $\cal P$ from Figure~\ref{fig:invariants}.
The program involves only two threads: $T_1$ which executes $P_1$ and $T_2$ which executes $P_2$.
%Suppose {\tt thread2} starts running. It will set the variables $x=12$ and $y=2$, and
%then unblocks {\tt thread1} by setting the Boolean variable $blocked$ to false
%(which was initially set to true).
%{\tt thread1} can now go out of the while statement and set the variable $x$ to $6$ ($12/2$).
%After that it goes into an error state if the value of variable $x$ is odd, which is indeed
%unreachable. Notice that, at the second statement of {\tt thread1} $y\not=0$ is an invariant,
%it is always true that $y=2$.
Observe that any execution of $T_1$ cycles on the while-loop until
$T_2$ sets \emph{blocked} to false. But before this, $T_2$ sets $y$ to $2$ and
hence the assertion $(y\neq 0)$ is true in $P_1$.
However, in an execution of the simpler eager sequentialization, % (see Appendix~\ref{eager}), 
we would
simulate $P_1$ for $k$ rounds and then simulate $P_2$ through $k$ rounds.
In order to simulate $P_1$, the eager translation would \emph{guess} non-deterministically
a $k$-tuple of shared variables $u_1,\ldots,u_k$. Consider an execution where $u_1$
assigns \emph{blocked} to be true, and $u_2$  assigns \emph{blocked} to false and $y$ to $0$.
The sequential program would, in its simulation of $P_1$ in the first round,
reach the while-loop, and would jump to the second round to simulate $P_1$ from $u_2$.
Note that the assertion condition would fail, and will be duly noted by the sequential
program. But if the assertion wasn't there, the sequentialization would execute the statement $x:=x/y$,
which would results in a ``division by zero'' exception.
In short, $(y\neq 0)$ is not an invariant for the statement $x:=x/y$ in the eager sequentialization.
The lazy translation presented in the next section avoids such scenarios.
\qed
%Observe that the above sketched situation never happens in ${\cal P}_k^\ell$.
%In fact, assuming a scheduling $T_1,T_2$ in each round,
%$g_2$ gets computed executing $P_2$ and thus, whenever $g_2$ assigns \emph{blocked}
%with false then it always assigns $y$ with $2$.
\end{example}
% this feature is relevant in program analysis such as
%deductive verification. To see this consider the following example.

\subsection{Lazy sequentialization}\label{lazyTrans}

Without loss of generality,
we fix a  parameterized program ${\cal P}=(S,{\tt init},\{P\})$ over one process.
Note that this is not a restriction, as we can always build $P$ so that it makes a non-deterministic choice
at the beginning, and decides to behave as one of a set of processes. % $\{P_1, \ldots, P_n\}$.
We also replace functions with return values to {\tt void} functions that communicate the return value
to the caller using global (unshared) variables.
Finally, we fix a bound $k>0$ on the number of rounds.

We perform a lazy sequentialization of a parameterized program $\cal P$ by building a sequential
program that computes linear interfaces. More precisely, at the core of our construction is
a function {\tt linear\_int} that takes as input a set of valuations of shared variables
$\zug{u_1, \ldots, u_i, v_1, \ldots v_{i-1}}$ (for some $1 \leq i \leq k$) and \emph{computes} a shared valuation
$s$ such that $(\zug{u_1, \ldots, u_i}, \zug{v_1, \ldots, v_{i-1}, s})$ is a linear interface.
We outline how this procedure works below.

\begin{sloppypar}
The procedure {\tt linear\_int} will require the following pre-condition, and meet the following
post-condition and invariant when called with the input $\zug{u_1, \ldots, u_i, v_1, \ldots v_{i-1}}$:
\begin{description}
 \item[Precondition:] There is some $v_0$, an initial shared state,
 such that       $(\zug{v_0, v_1, \ldots v_{i-1}}, \zug{u_1, u_2, \ldots u_i})$ is a linear interface.
 \item[Postcondition:] The value of the shared state at the return, $s$, is such that
       $(\zug{u_1, \ldots, u_i}, \zug{v_1, \ldots v_{i-1}, s})$ is a linear interface.
 \item[Invariant:] At any point in the execution of {\tt linear\_int}, if the localized state is
      $(\widehat{\sigma}, s)$, and a statement of the parameterized program is executed from
      this state, then $(\widehat{\sigma}, s)$ is a localized state reached in some
      execution of ${\cal P}$.
\end{description}
\end{sloppypar}

Intuitively, the pre-condition says that there must be a ``left'' block of threads where the initial
computation can start, and which has a linear interface of the above kind.
This ensures that all the $u_i$'s are indeed reachable in some computation.
Our goal is to build {\tt linear\_int} to sequentially compute, using nondeterminism,
any possible value of $s$ such that $(\zug{u_1, \ldots, u_i}, \zug{v_1, \ldots v_{i-1}, s})$ is a linear interface
(as captured by the post-condition). The invariant above assures laziness; recall that the laziness
property says that no statement of the parameterized program will be executed on a localized
state of the sequential program that is unreachable in the parameterized program.

Let us now sketch how {\tt linear\_int} works on an input $\zug{u_1, \ldots, u_i, v_1, \ldots v_{i-1}}$.
First, it will decide non-deterministically whether the linear interface
is for a \emph{single thread} (by setting a variable $\mathit{last}$ to $T$, signifying it is simulating the last thread)
or whether the linear interface is for a block of threads more than one (in which case $\mathit{last}$ is
set to $F$).

It will start with the state $(\sigma_1,u_1)$ where $\sigma_1$ is an initial local
state of $P$, and simulate an arbitrary number of moves of $P$, and stop this simulation at some
point, non-deterministically, ending in state $(\sigma_1',u_1')$. At this point, we would like the
computation to ``jump'' to state $(\sigma_1', u_2)$, %, as in the eager translation. However,
however we need first to ensure that this state is reachable.

If $\mathit{last}=T$, i.e. if the thread we are simulating is the last thread, then this is easy,
as we can simply check if $u_1' = v_1$.
If $\mathit{last}=F$, then {\tt linear\_int} determines whether $(\sigma_1',u_2)$ is reachable
by \emph{calling itself recursively} on the tuple $\zug{u_1'}$, getting the return value $s$,
and checking whether $s=v_1$. In other words, we claim that
$(\sigma_1', u_2)$ is reachable in the parameterized program if $(u_1',v_1)$ is a linear interface.

%%%%%%%%%%%%%%%
Here is a proof sketch. Assume $(u_1',v_1)$ is a linear interface; then by the pre-condition we know that there
is an execution starting from a shared initial state to the shared state $u_1$. By switching to
the current thread $T_h$ and using
the local computation of process $P$ just witnessed, we can take the state to $u_1'$ (with local
state $\sigma_1'$), and since $(u_1',v_1)$ is a linear interface, we know there is a ``right'' block of
processes that will somehow take us from $u_1'$ to $v_1$. Again by the pre-condition, we know that
we can continue the computation in the second round, and ensure that the state reaches $u_2$,
at which point we switch to the current thread $T_h$, to get to the local state $(\sigma_1',u_2)$.

The above argument is the crux of the idea behind our construction. In general, when we have
reached a local state $(\sigma_i',u_i')$, {\tt linear\_int} will call itself on the tuple
$u_1', \ldots, u_i', v_1, \ldots v_{i-1}$, get the return value $s$ and check if $s=v_i$, before it
``jumps'' to the state $(\sigma_i', u_{i+1})$. Note that when it calls itself, it maintains
the pre-condition that there is a $v_0$ such that $(\zug{v_0, v_1, \ldots v_{i-1}}, \zug{u_1', \ldots, u_i'})$
is a linear interface by virtue of the fact that the pre-condition to the current call holds,
and by the fact that the values $u_1', \ldots, u_i'$ were computed consistently in the current
thread.

The soundness of our construction depends on the above argument. Notice that the laziness invariant
is maintained because the procedure calls itself to check if there
is a ``right'' block whose linear interface will witness reachability, and the computation involved in this is assured
to meet only reachable states because of its pre-condition which demands that there is a ``left''-block
that assures an execution.

Completeness of the sequentialization relies on several other properties of the construction. First,
we require that a call to {\tt linear\_int} returns \emph{all possible values} of $s$ such that $(\zug{u_1, \ldots, u_i}, \zug{v_1, \ldots v_{i-1}, s})$ is a linear interface. Moreover, we need that for every execution
corresponding to this linear interface and every local state $(\sigma,s)$ seen along such an
execution, the local state is met along some run of {\tt linear\_int}. It is not hard
to see that the above sketch of {\tt linear\_int} does meet these requirements.

Notice that when simulating a particular thread, two successive calls to {\tt linear\_int} may result in
different depths of recursive calls to {\tt linear\_int}, which means that a different number of
threads are explored. However, the correctness of the computation does not depend on this,
as correctness only relies on the fact that {\tt linear\_int} computes a  linear
interface, and the number of threads in the block that witnesses this interface is immaterial.
This property of a linear interface that encapsulates a block of threads no matter how their
internal composition is, is what makes a sequentialization without extra counters possible.

We will have a {\tt main} function that drives calls to {\tt linear\_int}, calling it to compute
linear interfaces starting from a shared state $u_1$ that is an initial shared state.
Using successive calls, it will construct linear interfaces of the form $\zug{u_1, \ldots, u_i, v_1, \ldots, v_i}$ maintaining that $v_j=u_{j+1}$, for each $j < i$. This will ensure that the interfaces it computes
are \emph{wrapped} interfaces, and hence the calls to {\tt linear\_int} meet the latter's pre-condition.
When it has computed a complete linear interface of length $k$, it will stop, as any localized state
reachable in a $k$-round schedule would have been seen by then (see Lemma~\ref{wrapped-lin-int}).

\paragraph{\bf The syntactic transformation.}
\begin{figure}[tb]

\hspace*{30pt}
\framebox{
\begin{minipage}{380pt}
\small
%\emph{Declare global each global variable of each thread of $\cal P$;}\\
\emph{Denote} $\overline{q}_{i,j}=q_i,\ldots,q_j$\\
\emph{Let s be the shared variables and g the global variables of $\cal P$;  }\\
\hspace*{0.1truecm}{\tt bool} $\mathit{\atomic, terminate}${\tt ;}\\

%\begin{multicols}{3}
\vspace*{-0.3truecm}
\begin{multicols}{2}
{\tt main(\,) }\\
{\tt begin}\\
\hspace*{0.3truecm}{\tt Let $q_1,\ldots,q_k$ be of type of $s$;}\\
\hspace*{0.3truecm}{\tt int $i=1$;}\\
\hspace*{0.3truecm}$\mathit{\atomic}:= F${\tt ;}\\
\hspace*{0.3truecm}{\tt call init();}\\
\hspace*{0.3truecm}$q_1:=g${\tt ;}\\
\hspace*{0.3truecm}{\tt while\,$(i \le k)$ do}\\
\hspace*{0.7truecm}$\mathit{terminate}:=F${\tt ;}\\
\hspace*{0.7truecm}{\tt call linear\_int$(\overline{q}_{1,k},\overline{q}_{2,k},i)$;}\\
\hspace*{0.7truecm}$i${\tt ++;}\\
\hspace*{0.7truecm}{\tt if\,$(i\le k)$ then}\\
\hspace*{1.1truecm}$q_i:=s${\tt ;}\\
\hspace*{0.7truecm}{\tt fi}\\
\hspace*{0.3truecm}{\tt od}\\
%\hspace*{0.6truecm}{\tt assume(goal);}\\
%\hspace*{0.6truecm}\Goal{\tt : skip;}\\
\hspace*{0.3truecm}{\tt return;}\\
{\tt end}\\
%\\
%{\tt process$^\ell (q_1,\ldots,q_k,q'_1,\ldots,q'_{k-1},\mathit{bound})$}\\
%{\tt begin}\\
%\hspace*{0.3truecm}{\tt if\,(*) then}\\
%\hspace*{0.3truecm}\emph{Declare local variables of main for process $P$}\\
%\hspace*{0.3truecm}{\tt int $j=1$;}
%\hspace*{0.3truecm}{\tt bool $\mathit{last}=*$;}\\
%\hspace*{0.3truecm}\emph{interlined code}\\
%\hspace*{0.3truecm}\emph{first modified statement of function main of $P$}\\
%\hspace*{0.3truecm}\emph{interlined code}\\
%\hspace*{0.3truecm}$\ldots$\\
%\hspace*{0.3truecm}$h$-\emph{th modified statement of function main of $P$}\\
%\hspace*{0.3truecm}\emph{interlined code}\\
%\hspace*{0.3truecm}$\ldots$\\
%\hspace*{0.3truecm}{\tt assume($F$);}\\
%\hspace*{0.3truecm}{\tt return;}\\
%{\tt end}\\
\\
\\
\\
%\vspace{-0.9truecm}\\
\underline{\emph{Interlined code}}:\\
\hspace*{0.1truecm}{\tt if\,($\mathit{terminate}$) then return; fi}\\
\hspace*{0.1truecm}{\tt if\,($\neg\mathit{\atomic}$) then}\\
\hspace*{0.5truecm}{\tt while\,(*) do}\\
\hspace*{0.9truecm}{\tt if\,($\mathit{last}$) then}\\
\hspace*{1.3truecm}{\tt if\,($j=\mathit{bound}$) then}\\
\hspace*{1.7truecm}{\tt $\mathit{terminate}:=T$;~return;}\\
%\\ \hspace*{1.5truecm}
\hspace*{1.3truecm}{\tt else}
%\\ \hspace*{1.5truecm}
%{\tt $q'_j:=g$;~
~{\tt assume($q'_j=s$);}\\
\hspace*{1.7truecm}{\tt $j$++; ~$s:=q_j$;}\\ 
\hspace*{1.3truecm}{\tt fi}\\
\hspace*{0.9truecm}{\tt else}\\
\hspace*{1.3truecm}$q_j:=s${\tt ;} ~\emph{save}$:=g${\tt ;} \\
\hspace*{1.3truecm}{\tt call linear\_int$(\overline{q}_{1,k},\overline{q}'_{1,k-1},j)$;}\\
%\hspace*{1.2truecm}{\tt assume($q'_j=g$);}\\
\hspace*{1.3truecm}{\tt if\,($j=\mathit{bound}$) then}
~~~{\tt return;}\\
%\\ \hspace*{1.5truecm}
\hspace*{1.3truecm}{\tt else}
%\\ \hspace*{1.5truecm}
~~~{\tt assume($q'_j=s$); }~$g:=$\emph{save}{\tt ;}\\
\hspace*{1.7truecm} {\tt $\mathit{terminate}:=F$;\,$j$++;\,$s:=q_j$;}\\ 
\hspace*{1.3truecm}{\tt fi}\\
\hspace*{0.9truecm}{\tt fi}\\
\hspace*{0.5truecm}{\tt od}\\
\hspace*{0.1truecm}{\tt fi}
\end{multicols}
%\vspace{-1cm}
\end{minipage}
}
\caption{Function {\tt main} and interlined control code of the sequential program
${\cal P}^{\mathit{lazy}}_k$.}\label{fig:lazy}
\vspace*{-0.5truecm}
\end{figure}

The sequential program ${\cal P}^{\mathit{lazy}}_k$ obtained  from ${\cal P}$ in the lazy sequentialization
consists of the function {\tt init} of $\cal P$, a new function {\tt main},
a function ${\tt linear\_int}$,  and for every function $f$ other than
{\tt main} in $\cal P$, a function $f^{\mathit{lazy}}$. The  function  {\tt main} of
 ${\cal P}^{\mathit{lazy}}_k$ is shown in Figure~\ref{fig:lazy}. The function ${\tt linear\_int}$
is obtained by transforming the {\tt main} function in the process of the
parameterized program, by interlining the code  shown in Figure~\ref{fig:lazy} between every two statements.
 Each  functions $f^{\mathit{lazy}}$ % in the process of the parameterized program 
is obtained from $f$ similarly by inserting the same interlined code. Clearly, in these transformations 
each call to $f$ gets replaced with a call to $f^{\mathit{lazy}}$.

The interlined code allows to interrupt the simulation of a thread (provided we are not in an atomic
section), and either jump directly to the next shared state (if $\mathit{last}=1$) or call recursively
{\tt linear\_int} to ensure that jumping to the next shared state will explore a reachable state.
%madhu: removing
% Since the values of $\overline{q}$, $\overline{q}'$ and $\mathit{bound}$ are needed in the interlined code, we pass them
% as parameters through all the calls to functions $f^\ell$.
Observe that
before calling ${\tt linear\_int}$ recursively from
the interlined code, we copy $g$ (i.e., the value of $P$'s global
variables)
to the local variables \emph{save}, and after returning, copy it back to $g$
to restore the local state.
%madhu: removing
%This is required
% since $g$ is reused in the new activation of ${\tt linear\_int}$,
%and thus the correct local state can be restored when the
%execution of the current thread is resumed.

%The function {\tt main} of ${\cal P}^\ell_k$
%and most of the control code that is inserted in the lazy translation scheme is shown in Figure~\ref{fig:lazy}.
The global variables of ${\cal P}^{\mathit{lazy}}_k$ includes all global and shared variables of $\cal P$,
as well as two extra global Boolean variables \emph{\atomic} and \emph{terminate}.
The variable \emph{\atomic} is used to flag that the simulation is within an atomic block of instructions
where context-switches are prohibited. The variable \emph{terminate} is used to force the return from
the most recent call to ${\tt linear\_int}$ in the call stack
(thus all the function calls which are in the call stack up to
this call are also returned).
This variable is set false in the beginning and after returning each call to
${\tt linear\_int}$.
%which are used with same meaning as for the eager translation.
%Variable \emph{\atomic} is set to false in the beginning, and
%variable \emph{terminate} is set to false

Function {\tt main} uses $k$ copies of the shared variables
denoted with $q_1,\ldots,q_k$.
It calls {\tt init} and then iteratively calls
${\tt linear\_int}$ with $i=1,\ldots,k$.
Variable $q_1$ is assigned in the beginning and
at each iteration $i < k$ the value of the shared variables is stored in $q_{i+1}$.
%Thus, when ${\tt linear\_int}$ is called with index $i$,
%variables $q_1,\ldots,q_i$ have been already assigned with a %computed value.

Function ${\tt linear\_int}$ is defined with formal parameters
$\overline{q}=\tuple{q_1,\ldots,q_k}$,
$\overline{q}'=\tuple{q'_1,\ldots,q'_{k-1}}$ and $\mathit{bound}$.
Variable \emph{bound} stores the bound on the number
of rounds to execute in the current call to ${\tt linear\_int}$.

The variable \emph{\atomic} is set to true when entering an atomic block and set back to false
on exiting it. The interlined code refers to
variables $\mathit{last}$ and $j$. The variable $\mathit{last}$  is nondeterministically
assigned when ${\tt linear\_int}$ starts.
Variable $j$ counts the rounds being executed so far in the current
call of ${\tt linear\_int}$ ($j$ is initialized to $1$).

We also insert
``{\tt assume}(F);" before each return statement of
${\tt linear\_int}$ which is not part of the interlined code;
this prevents a call to ${\tt linear\_int}$ to be returned
after executing to completion.

%A concrete sequentialization of a code depicting the essentials of a Windows~NT Bluetooth device driver
 %is given in the Appendix.

\ignore{
In the rest of this section we show some relevant properties
of ${\tt process}^{\mathit{lazy}}$ using induction on the structure of the recursive calls
to ${\tt process}^{\mathit{lazy}}$
itself. These properties will be used in turn to show that the resulting
program ${\cal P}^{\mathit{lazy}}_k$ is indeed ``reachability equivalent'' to
$\cal P$ up to $k$ rounds, and that along its executions the exploration of  $\cal P$ states
happens lazily.
}

\paragraph{\bf Correctness and laziness of the sequentialization}~\\
We now formally prove the correctness and laziness of our sequentialization.
%
%In this section we state some interesting properties of function
%${\tt linear\_int}$ and then use them to
%argue correctness, completeness and laziness of our approach.
%An informal argument showing these properties is given above, and
%we omit a formal proof here.
We start with a lemma stating that function ${\tt linear\_int}$ indeed
computes linear interfaces of the parameterized program $\cal P$
(i.e. meets its post-condition).
\begin{lemma}\label{lemma:post-condition}
Assume that ${\tt linear\_int}$ when called with actual parameters
$u_1,\ldots,u_k,$ $v_1,\ldots,v_{k-1},i$ terminates and returns.
If $\widehat{s}$ is the valuation of the (global) variable $s$ at return,
then $(\tuple{u_1,\ldots,u_i},\tuple{v_1,\ldots,v_{i-1},\widehat{s}})$
is a linear interface of {\cal P}.\hfill\qed
\end{lemma}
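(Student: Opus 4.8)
The plan is to prove the statement by well-founded induction on the finite tree of nested recursive calls to {\tt linear\_int} produced by the assumed terminating execution: I assume the claim for every recursive sub-call and establish it for the call under consideration. The starting observation is that the body of {\tt linear\_int} is exactly the {\tt main} procedure of the (single) process $P$ with the control code of Figure~\ref{fig:lazy} interlined between consecutive statements (and each $f$ replaced by $f^{\mathit{lazy}}$); hence, if one mentally deletes the interlined code and the recursive calls, what remains is a genuine local computation of a single fresh thread $T_h$ running $P$, operating on the global variables $g$ (its local state) and the shared variables $s$. I write $\widehat s$ for the returned value of $s$ and $\mathit{bound}=i$ for the round bound of this call.

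First I would pin down the \emph{shape} of the simulated computation of $T_h$. The thread is started in round $1$ from the initial local state with shared state $s=q_1=u_1$; whenever the interlined code fires its nondeterministic context-switch at round $j$, it records the current shared state via $q_j:=s$ (call this value $u'_j$), and then -- after the $\mathit{last}=T$ jump or the $\mathit{last}=F$ recursive call -- resumes round $j{+}1$ through $s:=q_{j+1}=u_{j+1}$ while leaving $T_h$'s local state intact. The key bookkeeping fact is that these jumps alter only the shared state: in the $\mathit{last}=F$ branch $g$ is clobbered by the recursive call but restored from \emph{save}, and $T_h$'s current frame and call stack simply sit on the sequential program's own stack across the call. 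Thus for each $j$ the round-$j$ portion of $T_h$'s run is reachable by local $P$-transitions from a state with shared value $u_j$ to one with shared value $u'_j$, with the local state preserved across round boundaries. Consequently $T_h$ alone witnesses, as a single-thread block ($m=1$), that $(\tuple{u_1,\ldots,u_i},\tuple{u'_1,\ldots,u'_i})$ is a linear interface of $\cal P$; I would verify this by instantiating each clause of the definition against this run (noting that the definition imposes no relation between $u'_j$ and $u_{j+1}$, exactly the freedom the jumps exploit).

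I would then split on $\mathit{last}$. If $\mathit{last}=T$ there are no recursive calls, and the assumes $\mathtt{assume}(q'_j=s)$ force $u'_j=v_j$ for $j<i$, while the final value is $u'_i=\widehat s$; so the single-thread interface above is already $(\tuple{u_1,\ldots,u_i},\tuple{v_1,\ldots,v_{i-1},\widehat s})$, the claim. If $\mathit{last}=F$, the head thread yields $(\tuple{u_1,\ldots,u_i},\tuple{u'_1,\ldots,u'_i})$, and the last context-switch ($j=i=\mathit{bound}$) triggers the recursive call ${\tt linear\_int}(\tuple{u'_1,\ldots,u'_i,\ldots},\tuple{v_1,\ldots,v_{i-1}},i)$ -- the first $i$ entries of $q$ hold exactly $u'_1,\ldots,u'_i$, having been overwritten in rounds $1,\ldots,i$ -- returning $\widehat s$. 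By the induction hypothesis this recursive call meets its post-condition, so $(\tuple{u'_1,\ldots,u'_i},\tuple{v_1,\ldots,v_{i-1},\widehat s})$ is a linear interface, the ``right block'' (only the final recursive call is needed here; the intermediate ones matter for the other lemmas). It remains to compose: I would concatenate $T_h$ with the block witnessing the right-block interface, observing that at each round $j$ the output $u'_j$ of $T_h$ matches the input of the right block; the concatenated block then satisfies every clause of the definition (endpoint conditions $x_1^j=u_j$ and $y_m^j=v_j$ resp.\ $\widehat s$, local-state preservation inside each sub-block, and shared-state continuity $x_{i+1}^j=y_i^j$ at the seam), giving $(\tuple{u_1,\ldots,u_i},\tuple{v_1,\ldots,v_{i-1},\widehat s})$.

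The main obstacle is the bookkeeping of the second paragraph: rigorously extracting from the instrumented execution of {\tt linear\_int} the clean single-thread local computation of $T_h$, and showing that the control code (the nondeterministic context-switches, the assignments $q_j:=s$ and $s:=q_{j+1}$, and especially the \emph{save}/$g$ discipline around recursive calls) neither perturbs $T_h$'s local state across round boundaries nor lets a recursive sub-call interfere with the outer simulation. Once this correspondence is in place, both the single-thread claim and the composition step reduce to routine checks against the clauses of the definition of linear interface, with the induction hypothesis supplying the right block in the $\mathit{last}=F$ case.
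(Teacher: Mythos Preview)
Your argument is correct and matches the approach the paper has in mind. The paper does not spell out a formal proof of this lemma (it is stated with a bare \qed), but the informal description of {\tt linear\_int} preceding the lemma is exactly the skeleton you flesh out: induction on the recursion tree, a case split on $\mathit{last}$, a single-thread witness when $\mathit{last}=T$, and in the $\mathit{last}=F$ case the composition of the head thread's interface $(\tuple{u_1,\ldots,u_i},\tuple{u'_1,\ldots,u'_i})$ with the interface produced by the final recursive call. Your observation that only the \emph{last} recursive call is needed for the post-condition (the intermediate ones at rounds $j<i$ being relevant for the laziness invariant of Lemma~\ref{lemma:preserveInvariant}) is exactly right and is the clean way to organize the soundness direction; the paper's prose blends the two concerns, but the structure is the same.
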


Consider a call to ${\tt linear\_int}$ such that the precondition stated
in page~\pageref{lazyTrans} holds.
Using the above lemma we can show that the localized states from which
we simulate the transition of $\cal P$ are discovered lazily, and
that the program ensures that the precondition holds on recursive calls to ${\tt linear\_int}$.

\begin{lemma}\label{lemma:preserveInvariant}
Let $(\tuple{v_0, v_1,\ldots,v_{i-1}},\tuple{u_1,\ldots,u_i})$
be an initial linear interface.
Consider a call
to ${\tt linear\_int}$ with actual parameters
$u_1,\ldots,u_k,$ $v_1,\ldots,v_{k-1},i$.
\vspace*{-0.1cm}
\begin{itemize}
 \item Consider a localized state reached during an execution of this call, and
       let a statement of $\cal P$ be simulated on this state.
       Then the localized state is reachable in some execution of {\cal P}.
 \item Consider a recursive call to {\tt linear\_int} with parameters $u'_1,\ldots,u'_k,$ $v_1,\ldots,v_{k-1},j$. Then \linebreak
$(\tuple{v_0, v_1,\ldots,v_{j-1}},\tuple{u_j',\ldots,u_j'})$ is a linear
       interface.\hfill\qed
\end{itemize}
\end{lemma}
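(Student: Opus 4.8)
The plan is to prove both bullets simultaneously by induction on the depth of recursive calls to {\tt linear\_int}, since the two statements are mutually entangled: establishing laziness at the current call depends on the precondition holding at recursive calls, and establishing the precondition at a recursive call depends on the localized states at the current call being genuinely reachable. I would set up the induction on the maximum recursion depth $d$ of the call tree rooted at the given call, with the hypothesis being the conjunction of both bullets, strengthened to carry the full precondition (that $(\tuple{v_0,v_1,\ldots,v_{i-1}},\tuple{u_1,\ldots,u_i})$ is an initial linear interface) as a standing assumption at every node.

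For the \emph{first bullet} (laziness), I would trace the execution of the single call. By hypothesis there is an initial linear interface $(\tuple{v_0,\ldots,v_{i-1}},\tuple{u_1,\ldots,u_i})$; by Lemma~\ref{wrapped-lin-int} and the definition of initial linear interface, this witnesses an actual execution of $\cal P$ reaching shared state $u_1$ (and, round by round, reaching each $u_\ell$ after wrapping through $v_1,\ldots,v_{i-1}$). The call starts simulation from $(\sigma_1,u_1)$ with $\sigma_1$ initial, so the very first localized state is reachable: switch into the current thread $T_h$ in the witnessed execution at the point where the shared state is $u_1$. Every subsequent localized state $(\widehat\sigma,s)$ at which a statement of $\cal P$ is simulated is obtained from a previous reachable one by either (i) a local move of $P$, which preserves reachability trivially by appending the move to the witnessing execution, or (ii) a ``jump'' to the next shared state $u_{\ell+1}$, which happens only after the guard {\tt assume}$(q'_j=s)$ succeeds—i.e. only after confirming that the computed $s$ equals $v_\ell$. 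In the $\mathit{last}=T$ case this directly matches the witnessing interface; in the $\mathit{last}=F$ case the recursive call returned $s=v_\ell$, and by the postcondition (Lemma~\ref{lemma:post-condition}) this $s$ certifies that $(u_\ell',v_\ell)$ is a linear interface. Combining the ``left'' block furnished by the precondition, the local computation just witnessed in $T_h$ to reach $u_\ell'$, and the ``right'' block furnished by this interface, I reconstruct an actual $\cal P$-execution reaching the jumped-to localized state $(\widehat\sigma,u_{\ell+1})$—this is exactly the crux argument sketched in the text, which I would make precise here.

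For the \emph{second bullet} (precondition propagation), I would argue that at the moment a recursive call with parameters $u_1',\ldots,u_j',v_1,\ldots,v_{j-1}$ is issued, the values $u_1',\ldots,u_j'$ were all computed consistently within the current thread: $u_\ell'$ is the shared state at which the current thread last context-switched out in round $\ell$. Using the standing precondition (there is an execution realizing $(\tuple{v_0,\ldots,v_{i-1}},\tuple{u_1,\ldots,u_i})$) together with the local moves of $T_h$ that produced $u_1',\ldots,u_j'$, I can stitch together a block realizing $(\tuple{v_0,v_1,\ldots,v_{j-1}},\tuple{u_1',\ldots,u_j'})$ as a linear interface, and it is initial because $v_0$ is the same initial shared state as in the outer precondition. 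This is what makes the recursive call legitimate and lets the induction go through.

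The main obstacle I expect is the bookkeeping that keeps the ``left block / current thread / right block'' decomposition coherent across \emph{all} rounds simultaneously: the precondition is a statement about a $k$-tuple (really an $i$-tuple) of shared states tied together by a single persistent block of threads, so when I splice in the current thread's local computation I must verify that the local states of the left block are preserved across rounds in a way compatible with the inserted thread, and that the shared-state wrapping $v_\ell=u_{\ell+1}$ enforced by {\tt main} lines up with the context-switch points. Making this splicing argument rigorous—essentially a careful composition lemma for linear interfaces showing that an initial interface composed with a conforming local thread computation and a certifying right-interface yields a reachable localized state—is the technical heart of the proof; everything else is a matter of reading off the control flow of the interlined code in Figure~\ref{fig:lazy}.
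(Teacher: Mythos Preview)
Your proposal is correct and follows essentially the same line as the paper. The paper does not give a formal proof of this lemma (it is stated with an immediate \qed), but the informal argument preceding the lemma statements---the ``crux'' paragraph explaining why $(\sigma_1',u_2)$ is reachable by splicing the left block from the precondition, the current thread's local computation, and the right block certified by Lemma~\ref{lemma:post-condition}, together with the remark that the recursive call ``maintains the pre-condition \ldots\ by virtue of the fact that the pre-condition to the current call holds, and by the fact that the values $u_1',\ldots,u_i'$ were computed consistently in the current thread''---is exactly the argument you spell out, and your simultaneous induction on recursion depth is the natural way to make it rigorous.
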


Note that whenever the function {\tt main} calls {\tt linear\_int}, it satisfies the pre-condition
for {\tt linear\_int}. This fact along with the above two lemmas establish the soundness
and laziness of the sequentialization.

The following lemma captures the completeness argument:
\begin{lemma}\label{lemma:completeness}
Let $\rho$ be a $k$-round execution of $\cal P$.
Then there is a wrapped initial linear interface \linebreak
$(\tuple{u_1,\ldots,u_k},\tuple{u_2,\ldots,u_k,v})$
that $\rho$ conforms to,
and a terminating execution $\rho'$ of ${\cal P}^{\mathit{lazy}}_k$ such that at the end of  $\rho'$,
the valuation of the variables $\zug{q_1, \ldots, q_k, s}$ is $\zug{u_1, \ldots, u_k, v}$.
Furthermore, every localized state reached in $\rho$ is also reached in $\rho'$.\hfill\qed
%
%For each linear interface
%$(\tuple{u_1,\ldots,u_i},\tuple{v_1,\ldots,v_i})$ of
%$\cal P$, there is a computation of ${\tt linear\_int}$
%starting by calling ${\tt linear\_int}$ with actual parameters
%$u_1,\ldots,u_k,$ $v_1,\ldots,v_{k-1},i$, which
%terminates and returns, and such that
%$v_i$ is the shared state of $\cal P$ at the return.
\end{lemma}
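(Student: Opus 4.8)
The plan is to first reduce to a wrapped interface and then to build $\rho'$ by an induction over the block of threads that mirrors the recursion of {\tt linear\_int}. Since $\rho$ is a $k$-round execution, Lemma~\ref{wrapped-lin-int} gives a wrapped initial linear interface to which it conforms; being wrapped and of length $k$, it has the form $(\tuple{u_1,\ldots,u_k},\tuple{u_2,\ldots,u_k,v})$ with $v_j=u_{j+1}$ for $j<k$ and $v_k=v$, which is the first assertion of the lemma. From the conformance I would extract the witnessing data of Figure~\ref{fig:theoLI}: threads $T_1,\ldots,T_m$ and, for each $h\in[1,m]$ and round $j\in[1,k]$, the shared state $x_h^j$ read when $T_h$ is switched into round $j$ and the state $y_h^j$ produced when it is switched out, together with the local computations. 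By the interface definition these satisfy $x_1^j=u_j$, $y_h^j=x_{h+1}^j$ for $h<m$, and $y_m^j=v_j$, each thread starts round $1$ from an initial local state, and local states are preserved across consecutive rounds.

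The core is the following claim, which I would prove by downward induction on $h$ from $m$ to $1$. For every $h\in[1,m]$ and every $b\in[1,k]$, a call to {\tt linear\_int} with $\mathit{bound}=b$ and actual parameters satisfying $q_r=x_h^r$ for $r\in[1,b]$ and $q'_r=v_r$ for $r\in[1,b-1]$ admits a terminating execution that (i) returns with $s=v_b$ and (ii) visits every localized state that $T_h,\ldots,T_m$ reach in $\rho$ during rounds $1$ to $b$. In that execution I fix all nondeterministic choices to follow $\rho$: the guess of $\mathit{last}$ is set to $T$ exactly at recursion depth $h=m$ and to $F$ otherwise; the {\tt while(*)} loop exits precisely where the simulated thread is context-switched out in $\rho$; and the thread-internal nondeterminism copies that of $\rho$. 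Since the call begins simulating $T_h$ from its initial local state with shared state $s=q_1=x_h^1$, and a thread's local transitions depend only on its local state and the shared state it reads, this recomputation reproduces exactly $T_h$'s round-$r$ computation of $\rho$, ending round $r$ at $s=y_h^r$.

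For the base case $h=m$ (so $\mathit{last}=T$) the call just runs $T_m$ across rounds $1,\ldots,b$: between rounds $r<b$ the interlined code checks $\mathit{assume}(q'_r=s)$, which holds since $s=y_m^r=v_r$, and resets $s:=q_{r+1}=x_m^{r+1}$; at $r=b$ it sets $\mathit{terminate}$ and returns with $s=y_m^b=v_b$. For the step $h<m$ (so $\mathit{last}=F$), after simulating $T_h$'s round $r$ the code sets $q_r:=s=y_h^r=x_{h+1}^r$, saves $g$, and recursively calls {\tt linear\_int} with $\mathit{bound}=r$; since the earlier rounds already set $q_{r'}:=x_{h+1}^{r'}$ for $r'<r$, this recursive call receives parameters with $q_{r'}=x_{h+1}^{r'}$ for all $r'\le r$ and $q'=v$, so the induction hypothesis for $h+1$ and bound $r$ applies. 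It returns $s=v_r$, so $\mathit{assume}(q'_r=s)$ holds; restoring $g:=$\emph{save} reinstates $T_h$'s preserved local state (matching the interface's across-rounds condition), and $s:=q_{r+1}=x_h^{r+1}$ sets up round $r+1$. The localized states of $T_h$ come from the direct simulation and those of $T_{h+1},\ldots,T_m$ from the recursive call with $r=b$, giving (ii).

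Finally I would apply the claim with $h=1$ to each iteration $i=1,\ldots,k$ of {\tt main}. Before iteration $i$ we have $q_r=u_r=x_1^r$ for $r\le i$ (initially $q_1:=u_1$, and each iteration sets $q_{i+1}:=s=v_i=u_{i+1}$ by wrappedness), and the call {\tt linear\_int}$(\tuple{q_1,\ldots,q_k},\tuple{q_2,\ldots,q_k},i)$ passes $q'_r=q_{r+1}=u_{r+1}=v_r$, again by wrappedness; thus the claim's hypotheses hold, the call returns $s=v_i$, and it visits all localized states of rounds $1$ to $i$. Ranging over $i$ covers every localized state of $\rho$, and after the last iteration $\tuple{q_1,\ldots,q_k,s}=\tuple{u_1,\ldots,u_k,v}$, as required. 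I expect the main obstacle to be the bookkeeping of the shared-state copies: verifying that the in-place updates $q_r:=s$ leave $q_r=x_h^r$ intact for exactly the rounds consumed by each recursive call, so that every $\mathit{assume}$ succeeds, and confirming that the \emph{save}/restore of $g$ correctly carries the preserved local state through the recomputation performed by the recursion.
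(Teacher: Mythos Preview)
The paper does not actually give a proof of this lemma: the \qed\ is placed immediately after the statement, and the only justification is the informal paragraph preceding the syntactic transformation, which says completeness follows because ``a call to {\tt linear\_int} returns all possible values of $s$'' and every localized state along a witnessing execution ``is met along some run of {\tt linear\_int}'', adding that ``it is not hard to see'' this holds.

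Your proposal is a correct and careful elaboration of precisely that sketch. The downward induction on the thread index $h$ (with the bound $b$ as a secondary parameter) is the natural way to mirror the recursion of {\tt linear\_int}, and your choice of invariant---$q_r=x_h^r$ for $r\le b$ and $q'_r=v_r$ for $r<b$---is exactly what is needed so that every {\tt assume} succeeds and every recursive call again satisfies the hypothesis for $h{+}1$. The observation that the in-place assignments $q_r:=s$ performed in rounds $r'<r$ accumulate so that the recursive call at round $r$ receives $q_{r'}=x_{h+1}^{r'}$ for all $r'\le r$ is the key bookkeeping step; the paper's prose glosses over it entirely, and you are right to single it out as the main technical point. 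The application to {\tt main} via wrappedness ($q'_r=q_{r+1}=u_{r+1}=v_r$) is also correct, and noting that the $i=k$ iteration alone already witnesses every localized state of $\rho$ would let you simplify the last paragraph slightly. In short, your approach is the intended one, just made explicit.
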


\noindent Consolidating the above lemmas, we have:
\begin{theorem}\label{theo:correct-complete-lazy}
Given $k \in \mathbb{N}$ and a parameterized program $\cal P$,
an assertion is violated in a $k$-round  execution of $\cal P$
if and only if an assertion is violated in
an execution of ${\cal P}^{\mathit{lazy}}_k$.
Moreover, ${\cal P}^{\mathit{lazy}}_k$ is lazy:
if  ${\cal P}^{\mathit{lazy}}_k$ simulates a statement of $\cal P$
on a localized state, then the localized state is reachable in $\cal P$.\hfill\qed
%all localised states
%which are explored in any execution of ${\cal P}^{\mathit{lazy}}_k$
%are reachable in $\cal P$.
\end{theorem}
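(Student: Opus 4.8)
The plan is to derive Theorem~\ref{theo:correct-complete-lazy} by consolidating Lemmas~\ref{lemma:post-condition}, \ref{lemma:preserveInvariant}, and \ref{lemma:completeness} with Lemma~\ref{wrapped-lin-int}. The first thing I would record is that \emph{every} call to {\tt linear\_int} issued by {\tt main} meets the precondition stated on page~\pageref{lazyTrans}. Indeed, {\tt main} starts from an initial shared state $u_1$ and, through its successive iterations, only ever calls {\tt linear\_int} on tuples $\zug{u_1,\ldots,u_i,v_1,\ldots,v_{i-1}}$ in which $v_j=u_{j+1}$ for $j<i$ (it copies the returned value of $s$ into $q_{i+1}$). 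Taking $v_0$ to be this initial shared state, the wrapped shape of these tuples makes $(\zug{v_0,v_1,\ldots,v_{i-1}},\zug{u_1,\ldots,u_i})$ an initial linear interface, so the hypothesis of Lemma~\ref{lemma:preserveInvariant} is discharged at the top level, and by that lemma's second bullet it is preserved along every recursive call. With the precondition thus available globally, the laziness clause of the theorem is immediate: by the first bullet of Lemma~\ref{lemma:preserveInvariant}, whenever ${\cal P}^{\mathit{lazy}}_k$ simulates a statement of $\cal P$ on a localized state $(\widehat{\sigma},s)$, that localized state is reachable in $\cal P$, which is exactly what the theorem asserts.

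For the soundness direction (``if''), I would suppose an assertion is violated in some execution of ${\cal P}^{\mathit{lazy}}_k$. Assertions are among the (interlined) statements of $\cal P$ and, as noted in Section~\ref{sec:pre}, express properties of the localized state alone. Hence the violation occurs while simulating an {\tt assert} of $\cal P$ on some localized state $(\widehat{\sigma},s)$; by the laziness just established, $(\widehat{\sigma},s)$ is reachable in a $k$-round execution of $\cal P$, and at that state the same {\tt assert} of $\cal P$ evaluates to false, so the assertion is violated in a $k$-round execution of $\cal P$.

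For completeness (``only if''), suppose an assertion is violated in a $k$-round execution $\rho$ of $\cal P$. By Lemma~\ref{wrapped-lin-int}, $\rho$ conforms to some wrapped initial linear interface, and by Lemma~\ref{lemma:completeness} there is a terminating execution $\rho'$ of ${\cal P}^{\mathit{lazy}}_k$ in which every localized state reached along $\rho$ is also reached. The localized state at which the offending {\tt assert} fails in $\rho$ is therefore reached in $\rho'$ while simulating the same {\tt assert}, so the assertion is violated in $\rho'$ as well.

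Because the technical content is front-loaded into the three lemmas, the steps above are short, and I expect the only real work to be the bookkeeping that connects the two notions of ``assertion violation.'' Concretely, I would have to check that the interlining transformation leaves each original {\tt assert} of $\cal P$ in place and that it is evaluated on exactly the localized state $(\widehat{\sigma},s)$ that $\cal P$ would see: the interlined control code reads and writes only the auxiliary variables $\mathit{atom}$, $\mathit{terminate}$, $\mathit{last}$, $j$, the shared-state copies $\overline{q},\overline{q}'$, and the \emph{save} slot, none of which an original assertion refers to. The one genuine subtlety is ensuring that the localized states delivered by the laziness invariant are reachable specifically within a $k$-round schedule, and not merely in an arbitrary execution; this I would extract from the fact that the precondition of Lemma~\ref{lemma:preserveInvariant} is phrased in terms of linear interfaces of length at most $k$, so that the witnessing $\cal P$-executions are $k$-round executions by Lemma~\ref{wrapped-lin-int}.
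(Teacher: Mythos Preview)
Your proposal is correct and follows essentially the same route as the paper: the paper's proof consists of the one-line observation that {\tt main}'s calls to {\tt linear\_int} satisfy the precondition, together with the remark ``Consolidating the above lemmas, we have'' --- exactly the structure you lay out, only you spell out the bookkeeping (wrapped shape of the $q_i$'s, assertions depending only on localized state, recursive preservation of the precondition) that the paper leaves implicit.
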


\subsection{\bf Parameterized programs over finite data domains:}
A sequential program with variables ranging over finite domains can be modeled as a
pushdown system. Analogously, a parameterized program with variables ranging over finite
domains can be modeled as a parameterized multi-stack pushdown system,
i.e., a  system composed of a finite number of pushdown systems sharing a portion of the
control locations, which can be replicated in an arbitrary number of copies in each run.
A \emph{parameterized multi-stack pushdown system} ${\cal A}$ is thus a
tuple $(S,S_0,\{A_i\}_{i=1}^n)$, where $S$ is a finite set of shared locations, $S_0\subseteq S$ is
the set of the initial shared locations and  for $i\in[1,n]$, with
$A_i$ is a standard pushdown system whose set of control locations is $S\times L_i$ for some finite set $L_i$.
We omit a formal definition of the behaviors of  $\cal A$ which can be
easily derived from the semantics of parameterized programs given in Section \ref{sec:pre},
by considering that each $s\in S$ is the analogous of
 a shared state in the parameterized programs, a state of each $A_i$ is the analogous of
a local state of a process, and thus $(s,l)\in S\times L_i$ corresponds to a localized state.

Following the sequentialization construction given earlier in this section
to construct the sequential program ${\cal P}_k^{\mathit{lazy}}$ from a parameterized
program $\cal P$, we can construct from $\cal A$ a pushdown system ${\cal A}_k$ such that
the reachability problem under $k$-round schedules in $\cal A$ can be reduced to
the standard reachability problem in ${\cal A}_k$. Also, the number of locations of ${\cal A}_k$ is
$O(\ell \,k^2\, |S|^{2k})$ and the number of transitions of ${\cal A}_k$ is
$O(\ell\,d\,k^3\,|S|^{2k-1})$ where $\ell$ is $\sum_{i=1}^n |L_i|$ and
$d$ is the number of the transitions of $A_1,\ldots,A_n$.
%(for more details see  Appendix~\ref{app:seq}).
%The details of the construction of ${\cal A}_k$ are given in Appendix~\ref{app:seq}.

\begin{theorem}\label{th:PDS}
Let $\cal A$ be a parameterized multi-stack pushdown system and $k\in \mathbb{N}$.
Reachability up to $k$-round schedules in $\cal A$ reduces to reachability
in ${\cal A}_k$. Moreover, the size of ${\cal A}_k$ is singly exponential in $k$ and
linear in the product of the number of locations and transitions of  $\cal A$.
\hfill\qed
\end{theorem}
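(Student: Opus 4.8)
The plan is to obtain ${\cal A}_k$ as the finite-domain instance of the lazy sequentialization of Section~\ref{lazyTrans}, and to inherit the reduction directly from Theorem~\ref{theo:correct-complete-lazy}. A parameterized multi-stack pushdown system is exactly a parameterized program over a finite data domain: $S$ plays the role of the shared-variable valuations and the local part $L_i$ of $A_i$ the role of process $i$'s local-variable valuations. First I would fold $A_1,\ldots,A_n$ into a single process by an initial nondeterministic branch that selects which $A_i$ to run --- matching the single-process assumption of Section~\ref{lazyTrans} --- giving a local-location set of size $\ell=\sum_i |L_i|$ and transition count $d$. Applying the construction of Figure~\ref{fig:lazy} verbatim then yields ${\cal A}_k$. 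Because the domain is finite, every auxiliary object introduced by the translation (the tuples $q_1,\ldots,q_k$ and $q'_1,\ldots,q'_{k-1}$, the counters $j$ and \emph{bound}, the Booleans \emph{last}, \emph{atomic}, \emph{terminate}, and the saved local copy \emph{save}) ranges over a finite set and is absorbed into finite control; the only unbounded resource is the call stack. Since each recursive \texttt{linear\_int} call is issued from within a thread simulation and, by the \emph{terminate} discipline, returns before the frame beneath it resumes, the recursion of the original process and that of \texttt{linear\_int} share a single stack, so ${\cal A}_k$ is a genuine single-stack pushdown system. As reaching a designated location is the special case of assertion reachability given by an \texttt{assert F}, Theorem~\ref{theo:correct-complete-lazy} specialized to finite domains yields at once that a target location is reachable in ${\cal A}$ under $k$-round schedules iff the corresponding location is reachable in ${\cal A}_k$.

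For the size bound I would encode a control location of ${\cal A}_k$ as the global state ($s\in S$, the current local state in $L$, and the Booleans \emph{atomic}, \emph{terminate}) together with the live frame's parameter tuples and round bookkeeping, letting the single stack hold the suspended frames of both \texttt{linear\_int} and the original process's procedure calls. The $|S|$-valued components of a location are the live shared state $s$ (a factor $|S|$), the tuple $\overline{q}$ (a factor $|S|^{k}$), and the tuple $\overline{q}'$ (a factor $|S|^{k-1}$), for a combined $|S|^{2k}$; the local state and program point contribute $\ell$; and the counters $j$ and \emph{bound}, each in $[1,k]$, contribute $k^2$, the remaining Booleans being a constant factor. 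This gives $O(\ell\,k^2\,|S|^{2k})$ locations. For transitions I would lift each of the $d$ base transitions over the context frozen during a local step --- the parameter tuples ($|S|^{2k-1}$, since the live $s$ is read and written by the step rather than acting as a free multiplier) and the round data (contributing $k^3$ once the interlined round-jump and recursive-call rules are counted) --- giving $O(\ell\,d\,k^3\,|S|^{2k-1})$ transitions. The headline statement, singly exponential in $k$ and linear in the product of the numbers of locations and transitions of ${\cal A}$, then follows.

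The hard part will be keeping the exponent in $k$ pinned at $|S|^{2k}$ rather than inflating it. Concretely, I must verify that only the shared-state copies $s$, $\overline{q}$, $\overline{q}'$ carry an $|S|$-exponent; in particular that \emph{save}, which holds a copy of the process's global state, does not multiply the control-location count by a further factor of $\ell$. This holds because \emph{save} is live only across a single recursive \texttt{linear\_int} call, so it can be confined to the stack alphabet and omitted from the finite control. The companion obligation is the single-stack claim: one must confirm that the \emph{terminate}-driven unwinding never forces two frames to resume simultaneously, so that the process recursion and the \texttt{linear\_int} recursion genuinely coexist on one stack. Both checks are local to the construction of Figure~\ref{fig:lazy} and reuse the correctness already established in Theorem~\ref{theo:correct-complete-lazy}, without revisiting the linear-interface machinery.
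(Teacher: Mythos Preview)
Your proposal is correct and follows exactly the approach the paper takes: instantiate the lazy sequentialization of Section~\ref{lazyTrans} over the finite data domain of $\cal A$, inherit correctness from Theorem~\ref{theo:correct-complete-lazy}, and read off the location and transition counts from the auxiliary variables introduced in Figure~\ref{fig:lazy}. The paper itself gives no more than this sketch (the theorem is stated with a bare \qed\ after the paragraph announcing the bounds), so your elaboration of the counting argument---in particular the observation that \emph{save} lives on the stack rather than in control, and the check that the \texttt{linear\_int} recursion nests with the process recursion on a single stack---supplies detail the paper omits.
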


\ignore{

%%   4.3. Unbdd linear interfaces and conversion
\subsection{Unbounded reachability}\label{sec:unbLI}
In this section, we relax the bound on the length of linear interfaces
in our lazy conversion scheme, and show that the obtained scheme
indeed reduces reachability of parameterized programs to reachability of sequential programs.

We fix a parameterized program
${\cal P}=(S,{\tt init},\{P_i\}_{i=1}^{n})$ and a program counter $\pc$.

Denote with ${\cal P}^\ell(\pc)$ the sequential program obtained from ${\cal P}^\ell_k$ with the
following modifications.
First, the tuples of variables $q_1,\ldots,q_k$ and $q'_1,\ldots,q'_k$ are
replaced  with dynamic structures, such as for example linked lists, such that
we can store linear interfaces of arbitrary length.
Second, we use a global Boolean variable \emph{goal} which is set to true
right before the statement labeled with $\pc$.
Finally, the while loop of function {\tt main} is replaced with an infinite loop,
and after the call to ${\tt process}^\ell$ within the loop, if
\emph{goal} is true then a return statement labeled with \emph{Target} is executed.

Recall that each call to ${\tt process}^\ell$ computes a linear interface of a fixed length $i$
which is passed as a parameter in the call.
Calls to ${\tt process}^\ell$ from the loop in function {\tt main} are issued for increasing
values of $i$.
Therefore, by Theorem~\ref{theo:lazy} we have the following theorem.
\begin{theorem}\label{theo:unboundedLI}
Let $\cal P$ be a parameterized program.
A program counter \pc\ is reachable in $\cal P$ if and only if
\emph{Target} is reachable in ${\cal P}^\ell(\pc)$.
\end{theorem}

}

\section{Conclusions and Future Work}
We have given an assertion-preserving
efficient sequentialization of parameterized concurrent programs under bounded round schedules.
%translations to sequential programs that preserve reachability as well as
%model-checking algorithms based
%on a fixed-point formulation.
%
%There are several avenues for future work. First, we do not know whether our
%framework is \emph{relatively
%complete}: in other words, assuming that a concurrent program can be proven correct,
%say using
%classical techniques (like Owicki-Gries), we would like to know whether we can
%always annotate the sequential
%programs with invariants and verify them using Floyd-Hoare mechanisms.

An interesting future direction is to practically utilizing the sequentialization
to analyze parameterized concurrent programs. 
For concurrent programs with a finite number of threads, \emph{bounded-depth}
verification using SMT solvers has worked well, especially using eager translations~\cite{ZvonimirCAV09,ZvonimirSPIN}.
However, since the sequentialization described in this paper introduces recursion
even for bounded-depth concurrent programs, it would be hard to verify the resulting
sequential program using SMT solvers.
We believe that verifying the sequential program using abstract interpretation techniques
that are context-sensitive would be an interesting future direction to pursue; in this context, 
the laziness of the translation presented here would help in maintaining the accuracy of the analysis.

%
%When the resulting sequential
%program is subject to \emph{bounded-depth verification} using SMT solvers,
%the lazy sequentialization is not likely to work well, but the

%as done for concurrent programs in~\cite{ZvonimirCAV09}. While the translation is not hard to
%implement, the bottleneck here is that automatic deductive verification using SMT solvers
%for recursive programs requires capturing summaries, which hasn't been engineered
%efficiently yet.

%, can also be
%effected for parameterized
%systems by using our translation in order to prove device drivers correct up to $k$
%round schedules.

Finally, sequentializations can also be used to subject parameterized programs to
abstraction-based model-checking. It would be worthwhile to pursue under-approximation of
static analysis of concurrent and parameterized programs (including data-flow and
points-to analysis) using sequentializations.

%\small

%\bibliographystyle{eptcs}
%\bibliography{concur2011}

\end{document}